\tikzstyle{roundnode} = [circle, draw=black, scale=0.7]
\tikzstyle{edgestyle} = [circle, fill=white, scale=0.8]
\tikzstyle{edgestylesmall} = [circle, fill=white, scale=0.6]
\DeclareMathOperator{\rank}{rank}
\DeclareMathOperator{\first}{first}
\begin{document}

\newtheorem{theorem}{Theorem}
\newtheorem{claim}{Claim}
\newtheorem{lemma}{Lemma}
\newtheorem{definition}{Definition}

\title{Profile-based optimal stable matchings in the Roommates problem}
\author{Sofia Simola \\ University of Glasgow \\ \textit{sofia.simola@tuwien.ac.at} \and David Manlove \\ University of Glasgow \\ \textit{david.manlove@glasgow.ac.uk}}

\maketitle

\begin{abstract}
The stable roommates problem can admit multiple different stable matchings. We have different criteria for deciding which one is optimal, but computing those is often NP-hard.

We show that the problem of finding generous or rank-maximal stable matchings in an instance of the roommates problem with incomplete lists is NP-hard even when the preference lists are at most length 3. We show that just maximising the number of first choices or minimising the number of last choices is NP-hard with the short preference lists.

We show that the number of $R^{th}$ choices, where $R$ is the minimum-regret of a given instance of SRI, is 2-approximable among all the stable matchings. Additionally, we show that the problem of finding a stable matching that maximises the number of first choices does not admit a constant time approximation algorithm and is W[1]-hard with respect to the number of first choices.

We implement integer programming and constraint programming formulations for the optimality criteria of SRI. We find that constraint programming outperforms integer programming and an earlier answer set programming approach by Erdam et. al. \cite{erdem_answer_2020} for most optimality criteria. Integer programming outperforms constraint programming and answer set programming on the almost stable roommates problem.
\end{abstract}

\section{Introduction}

In the Stable Roommates problem (SR) we have a number of agents we want to match in pairs. Each agent lists every other agent in order of preference. We want to find a matching that is stable, i.e. one in which there is no pair of agents that prefer each other to their assigned partners.

SR may admit multiple stable matchings. Since the problem has multiple practical applications \cite{selected_open_problems}, including roommate allocations, chess tournaments \cite{kujansuu1999stable} and peer-to-peer networking \cite{p2p} we often want the stable matching that is best according to some criteria - for example we might want the agent who is worst off to have as good a partner as possible. Unfortunately, it is NP-hard to find stable matchings satisfying many of these criteria in instances of SR.

In many practical applications, agents may not find every other agent acceptable. Perhaps they do not know more than a subset of other agents, or there may be an agent they do not like. We call the problem where the agents may veto other agents the Stable Roommates Problem with Incomplete lists (SRI). No stable matching may contain a pair of agents that find each other unacceptable.

There are conceivable use cases where the number of acceptable agents is bounded by some integer. It is possible, for example, that the agents were only allowed to rank a small number of other agents. We show that rank-maximal and generous SRI and FC-SRI remain NP-hard even when the preference lists are of length at most 3.

In rank-maximal SRI, we are first trying to find a stable matching maximising the number of first choices, then subject to that, the number of second choices, and so on. Generous is similar: we are first trying to find the stable matching minimising the number of $n^{th}$ choices, where $n$ is the number of agents, then subject to that, the number of $(n-1)^{th}$ choices and so on. In FC-SRI we are trying to find the stable matching that maximises the number of first choices.

Sometimes we are happy with a matching that is good enough. We present a 2-approximation algorithm for minimising the number of $k^{th}$ choices among all the stable matchings, where $k$ is the minimum regret - the ranking of the lowest choice any agent is matched to. We also show that there is no constant-time approximation algorithm for the first-choice-maximal SRI.

Integer Programming (IP) and Constraint Programming (CP) are methods for solving NP-hard problems. They use multiple heuristics and are often able to solve problems substantially faster than a brute-force approach. In IP we model our problem as a system of linear inequalities. These inequalities give us constraints, and we want to maximise some function subject to them \cite{conforti_integer_2014}. In CP we have variables and different constraints that relate these variables to each other and define what kinds of solutions are feasible. A CP solver explores the search tree of different possible solutions \cite{pesant_constraint_2014}.

In this paper we present IP and CP models for SRI with different optimality criteria and evaluate their performance. We find that constraint programming outperforms integer programming and the earlier Answer Set Programming (ASP) formulation by Erdam et al. \cite{erdem_answer_2020} for most optimality criteria. Integer programming vastly outperforms constraint programming and answer set programming on the almost stable roommates problem.

In the next section we formally present our problems and go into more detail about their background. The following sections describe our theoretical results: Section~\ref{sec:np_hardness_proofs} describes our proofs for NP-hardness of rank-maximal, generous and first-choice-maximal SRI with short preference lists, Section~\ref{sec:approximability_results} describes the inapproximability and approximability results. In Section~\ref{sec:cp_ip} we describe our IP and CP models and their implementation, which are evaluated in Section~\ref{sec:cp_ip_eval}. Finally, we present conclusions and ideas for future work in Section~\ref{sec:future}.

\section{Background}

\begin{table}[h]
\centering
\begin{tabular}{ c c c c} 
$a_1$: & $a_2$ & $a_3$ & $a_4$ \\ 
$a_2$: & $a_3$ & $a_1$ & $a_4$ \\ 
$a_3$:  & $a_1$ & $a_2$ & $a_4$ \\ 
$a_4$: & \multicolumn{3}{c}{\textit{Arbitrary}} \\
\end{tabular}
\caption{An instance of SR with no stable matching \cite{gale_college_1962}}
\label{table:sr_nostable}
\end{table}

\subsection{Stable Roommates problem}
The Stable Roommates Problem (SR) was first introduced by Gale and Shapley \cite{gale_college_1962}. They noted that unlike the related stable marriage problem, SR does not always admit a stable matching. An instance where that happens is in Figure~\ref{table:sr_nostable}. If we match the pairs $\{a_1, a_2\}$ and $\{a_3, a_4\}$, $a_2$ and $a_3$ prefer each other to their assigned partners. Similarly, if we match $\{a_1, a_3\}$ and $\{a_2, a_4\}$ $a_1$ and $a_2$ prefer each other to their assigned partners, and if we match $\{a_2, a_3\}$ and $\{a_1, a_4\}$, then $a_1$ and $a_3$ block the matching.

Irving \cite{irving_efficient_1985} provided a polynomial-time algorithm that either finds a stable matching for a given instance of SR or reports that none exists in polynomial time.

Gusfield and Irving \cite{gusfield_book_1989} considered the extension of the stable roommates problem where the agents' preference lists do not need to be complete (SRI). They showed that the earlier algorithm by Irving \cite{irving_efficient_1985} extends naturally to this case. They also proved the important result that for any instance of SRI, the set of agents which are matched is the same across all the possible stable matchings \cite[Theorem 4.5.2]{gusfield_book_1989}.

\subsection{NP-hardness results and approximability}

Every matching can be associated with a cost - the sum of how every agent ranks their partner. In the egalitarian SRI problem we want to find a stable matching that minimises this cost. This optimality-criteria was proven to be NP-hard for an instance of SRI by Feder \cite{feder_fixedpoint_1992}. More recently, rank-maximal and generous SRI and FC-SRI were shown to be NP-hard by Cooper \cite{cooper_phd}. The minimum-regret SRI is an exception: Gusfield and Irving \cite{gusfield_book_1989} described an $O(n^2)$ algorithm, where $n$ is the number of agents.

Abraham et al. \cite{abraham_almost_2005} showed that the problem of finding a matching that has as few blocking pairs as possible is also NP-hard. Chen et al. showed that the problem is W[1]-hard with respect the number of blocking pairs \cite{chen_et_al:LIPIcs:2018:9039}.

There are a few approximation algorithms for the egalitarian stable matching for a given instance of SRI in the literature. Feder \cite{feder1994network} and Gusfield et al. \cite{gusfield_bounded_1992} presented 2-approximation algorithms for egalitarian SRI that were special cases of solving the optimal 2-SAT problem. Teo et al. \cite{teo1997lp} present a 2-approximation algorithm for the more general problem of finding an ``optimal" stable matching for a given instance of SRI. ``Optimal" is an extension of egalitarian where the cost of a matching is not restricted to be the sum of the ranks but is allowed to be any other function on the matchings. The algorithm is based on a linear programming formulation for SRI. The cost-functions on the preferences must satisfy the ``U-shaped"-condition, roughly meaning that they must be first decreasing and then increasing when moving down agents' preference lists.

Finding an egalitarian stable matching for a given instance of SRI whose preference lists are short was found to be NP-hard more recently by Cseh et al.\  \cite{cseh_stable_2019}. They showed that the problem remains NP-hard for instances with preference lists at most length 3. They also gave approximation results for the problem of finding an egalitarian stable matching with short preference lists for a given instance of SRI. Chen et al. \cite{chen_et_al:LIPIcs:2018:9039} showed that the problem is FTP with respect to the egalitarian cost.

\subsection{SRI formulations for different solvers}

Prosser \cite{prosser_constraint_2014} presented a CP formulation and implementation for SRI. The approach did not implement different optimality criteria, although Prosser noted that most randomly generated instances did not have many stable matchings. This led to the idea that generating all the stable matchings could be a feasible strategy for finding an optimal one.

ASP has also been used for SRI and its extension with ties (an agent is allowed to be indifferent between two agents) \cite{erdem_answer_2020}. The paper used ASP to find stable matchings of given instances of SRI or report that none exists. They also wrote formulations for finding the stable matchings satisfying the different optimality criteria. The paper found that in most cases CP outperformed ASP, even though the different optimality criteria were found when using CP by enumerating all the possible stable matchings. They also evaluated an earlier answer set formulation by Amendola \cite{amendola_solving_2018}, which mostly outperformed theirs, but did not implement any optimality criteria.

We are not aware of a previous IP implementation for SRI, but there are formulations for other matching problems in the literature. Rothblum \cite{rothblum_characterization_1992} wrote a linear programming formulation for stable marriage with incomplete lists  - a problem similar to SRI, except that agents are divided into two groups and they only find agents in the other group acceptable - which Abeledo and Rothblum \cite{abeledo_stable_1994} note works for SR as it is written for a (not necessarily bipartite) graph. Gusfield and Irving wrote a formulation for the stable marriage problem (with complete preference lists) \cite{gusfield_book_1989} and Vande Vate \cite{vate_linear_1989} for the optimal stable marriage, optimality having the same meaning as in the roommates case.

Other matching problems have IP implementations in the literature. Kwanashie and Manlove \cite{kwanashie_integer_2014} formulated an IP formulation for the Hospital/Residents problem with ties. Hospital/Residents problem with ties is an extension of the stable marriage problem with ties, where many agents in one group (residents) are matched to one agent in the other group (hospitals). When evaluated on instances with up to 400 residents, the median runtime stayed low, although the mean runtime for the biggest instances was above 300 s. This seems to indicate most instances were easy with a few difficult ones. Podhradsky \cite{podhradsky_lp_2010} created a linear programming formulation for approximating Stable Marriage with ties and incomplete lists, which is NP-hard.

\subsection{Formal definitions}

In this section we formally define our problems. We start with general terms and then describe the optimality criteria. Throughout this section, let $I$ be an arbitrary instance of SRI and $M$ an arbitrary matching.

\subsubsection{General terminology}

\begin{definition}[\cite{gale_college_1962}, \cite{gusfield_book_1989}]
In the \textsc{Stable Roommate Problem} (SR) $n$ agents rank every other agent in a strict order of preference. They must be paired in such a way that there are no blocking pairs (see Definition~\ref{def:blocking_pair}). If we allow agents' preference lists to be incomplete, the problem is called \textsc{Stable Roommate Problem with Incomplete Lists} (SRI). If agent $a_i$'s preference list contains another agent $a_j$, we say $a_i$ finds $a_j$ \emph{acceptable}, otherwise $a_i$ finds $a_j$ \emph{unacceptable}.
\end{definition}

\begin{definition}[\cite{gusfield_book_1989}]
A \emph{matching} in the context of SRI is a partition of a subset of agents into disjoint pairs. If agent $a_i$ is in one of the pairs in the matching, we say that $a_i$ is \emph{matched}. Otherwise, we say $a_i$ is unmatched. Moreover, if agent $a_i$ is in a pair with agent $a_j$, we say that $a_j$ is $a_i$'s \emph{partner}.
\end{definition}

\begin{definition}[\cite{gale_college_1962}]\label{def:blocking_pair}
The agents $\{a_i,a_j\}$ form a blocking pair if (i) each of $a_i$ and $a_j$ find one another acceptable; (ii) either $a_i$ is unmatched or prefers $a_j$ to her partner in $M$; and (iii) either $a_j$ is unmatched or prefers $a_i$ to her partner in $M$.
\end{definition}

\begin{definition}[\cite{gusfield_book_1989}]
Given two agents, $a_i$ and $a_j$, who find each other acceptable, $\rank(a_i,a_j)$ is 1 plus the number of agents that $a_i$ prefers to $a_j$. For example, if $a_3$ has the preference list $[a_1, a_2]$, $\rank(a_3, a_1) = 1$ and $\rank(a_3, a_2) = 2$.
\end{definition}

\begin{definition}
Let $I$ be an arbitrary instance of SRI. Then $A^*$ is the set of the agents that are matched in any stable matching of $I$. By theorem 4.5.2 from \cite{gusfield_book_1989}, these are the same for every stable matching of $I$.
\end{definition}

\begin{definition}[\cite{manlove2013algorithmics}]
Let the profile of $M$ be given by the vector $p(M) = \langle p_1\dots,p_L \rangle$ where $p_k = | \{ a_i\in A^* : \rank(a_i,M(a_i))=k  \}| $ and  $L$ is the maximum length of any agent's preference list. This means that the profile describes the number of first, second, etc. choices in a matching.
\end{definition}

\subsubsection{Optimality criteria}
In this section we formally describe the different optimality criteria. If the reader is interested, Appendix~\ref{app:optimality_examples} has some examples.

\begin{definition}[\cite{manlove2013algorithmics}]
A rank-maximal matching is a stable matching whose profile is lexicographically maximum over all the stable matchings of I. In other words, it first maximises the number first choices, then the number of second choices, and so on.
\end{definition}

\begin{definition}[\cite{manlove2013algorithmics}]
A generous stable matching is a stable matching $M$ whose reverse profile, denoted $p_R(M)$ is lexicographically minimum over all the stable matchings of $I$. In other words, it first minimises the number of $n^{th}$ choices, then the number of  $(n-1)^{th}$ choices, and so on, where $n$ is the number of agents.
\end{definition}

\begin{definition}
A first-choice-maximal stable matching is a stable matching $M$ where \\ $|\{ a_i \in A^* : \rank(a_i,M(a_i))=1 \}|$ is maximised.
\end{definition}

\begin{definition}[\cite{manlove2013algorithmics}]
The cost of a matching is \\ $c(M)=\sum_{a_i\in A^*} \rank(a_i,M(a_i))$, In other words, this is the sum of the ranks of the matched agents' partners.
\end{definition}

\begin{definition}[\cite{manlove2013algorithmics}]
An egalitarian stable matching is a stable matching $M$ of $I$ which minimises $c(M)$ over all the stable matchings of $I$.
\end{definition}

\begin{definition}\cite{fleiner_efficient_2007}
The regret of a matching is\\ $r(M) = \max_{a_i\in A^*} \rank(a_i,M(a_i))$. It measures the rank of the worst-off agent.
\end{definition}

\begin{definition}\cite{fleiner_efficient_2007}
A minimum-regret stable matching is a stable matching $M$ of $I$ which minimises $r(M)$ over all the stable matchings of $I$.
\end{definition}

\begin{definition}\cite{abraham_almost_2005}
An almost-stable matching is a matching (not necessarily stable) that admits the fewest number of blocking pairs across all the possible matchings of $I$. If $I$ admits a stable matching, then it is also always an almost-stable matching, as it has 0 blocking pairs.
\end{definition}

Almost-stable is a useful criteria in those real-world situations where no stable matching exists, but the agents need to be matched as well as possible.

\section{NP-hardness proofs}\label{sec:np_hardness_proofs}

In this section we present NP-hardness proofs for the problems of finding first-choice-maximal, rank-maximal and generous stable matchings for a given instance of SRI with short preference lists. Most of these proofs are very similar to each other and rely on the reduction visualised in Figure~\ref{fig:gadgets}.

\subsection{FC-SRI and Rank-maximal SRI with short preference lists}
We begin by defining the problem of finding a rank-maximal and first-choice-maximal matching for a given instance of SRI. Then we move on to describe our proof.

\begin{definition}
Let $I$ be an instance of SRI and $K$ an integer. Then \emph{FC-DEC-SRI} is the problem of deciding whether $I$ admits a stable matching with at least $K$ first choices. If the preference lists of $I$ are of at most length 3, we call the problem \emph{3-FC-DEC-SRI}. We call the optimisation version of the problem \emph{FC-SRI} and when preference lists are of at most length 3, \emph{3-FC-SRI}.
\end{definition}

\begin{theorem}\label{thm:1stmax_hard}
3-FC-DEC-SRI is NP-complete. Therefore 3-FC-SRI is NP-hard.
\end{theorem}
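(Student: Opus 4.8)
The plan is to reduce from a known NP-complete problem whose structure naturally produces a binary choice that can be encoded by "first choice vs.\ not first choice" in a stable matching. A good candidate is a restricted version of \textsc{3-SAT} or, perhaps more naturally given the parity/consistency flavour of roommates instances, a constraint-satisfaction problem such as a variant of \textsc{Vertex Cover} or \textsc{Exact Cover by 3-Sets} — mirroring the style used by Cseh et al.\ for egalitarian SRI with short lists. Concretely, I would introduce, for each variable, a small ``variable gadget'' (a cycle or short chain of agents with length-at-most-3 preference lists, like the no-stable-matching triangle of Table~\ref{table:sr_nostable} but repaired by an external partner), which in every stable matching is forced into one of exactly two configurations, corresponding to setting the variable true or false. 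Each configuration determines whether certain designated agents in the gadget get their first choice. Clause gadgets are then wired to the literal-agents so that a clause gadget can be ``satisfied'' — contributing its full quota of first choices — precisely when at least one incident literal is set to the satisfying value; an unsatisfied clause forces at least one agent down to a second or third choice.

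The key steps, in order: (1) describe the gadgets of Figure~\ref{fig:gadgets} precisely, giving every agent a preference list of length at most 3; (2) prove a structural lemma that in any stable matching, each variable gadget is in one of two ``canonical'' states (this is where the roommates instability phenomenon is exploited — a third configuration would always admit a blocking pair), and similarly that consistency is propagated to all copies of a literal; (3) compute the number of first choices contributed by each gadget in each state, and set the threshold $K$ to be the total achievable exactly when all clauses are satisfiable; (4) show the forward direction — a satisfying assignment yields a stable matching with $\ge K$ first choices — by exhibiting the matching explicitly and checking no blocking pair exists; (5) show the reverse direction — a stable matching with $\ge K$ first choices forces every clause gadget into its satisfied state, which by step (2) reads off a satisfying assignment. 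Membership in NP is immediate: a matching is a polynomial-size certificate, and checking stability and counting first choices is polynomial. NP-hardness of 3-FC-SRI follows because 3-FC-DEC-SRI reduces to it (binary search on $K$, or a single query with the optimal value).

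The main obstacle I expect is step (2): designing the variable gadget so that exactly two stable configurations survive, with the "forbidden" third configuration always creating a blocking pair, \emph{while simultaneously} keeping every preference list down to length 3 and ensuring the gadget's interface agents (those shared with clause gadgets) don't accidentally create blocking pairs across gadgets. Length-3 lists are very tight: each agent can ``see'' at most three others, so the gadget must be built from many low-degree agents, and the consistency-propagation argument (all occurrences of a literal agreeing) has to be engineered through chains of such agents rather than a single high-degree hub. Getting the bookkeeping of first choices exactly right — so that $K$ is met \emph{iff} the formula is satisfiable, with no slack that could be exploited by a non-satisfying assignment — is the delicate quantitative part, but it is routine once the gadget's two-state behaviour is nailed down.
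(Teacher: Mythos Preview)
Your high-level plan is sound and in fact very close to the paper's approach, but the paper commits to a specific source problem that you only list as one candidate: \textsc{Minimum Vertex Cover} restricted to cubic graphs. This choice is not incidental; it is what makes the length-3 constraint go through cleanly. Each vertex has exactly three neighbours, so the three ``interface'' agents of a vertex gadget each need only one external acceptability slot, and the remaining two slots are consumed internally. There are no separate clause gadgets at all: the role of a clause is played by an edge $\{u_i,u_j\}$ of the graph, and ``the clause is satisfied'' becomes ``at least one endpoint is in the cover,'' which is enforced directly by a potential blocking pair between an interface agent of $u_i$'s gadget and one of $u_j$'s gadget. Your SAT-style outline with separate variable and clause gadgets would also work in principle, but it is strictly more complicated here and makes the length-3 bound harder to maintain, since a clause gadget typically needs to touch three literal gadgets.

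The concrete gadget (your step (2), which you flagged as the hard part) is an 8-cycle $v_i^1,\dots,v_i^8$ on the vertex $u_i$, augmented with five pendant pairs $\{w_i^j,x_i^j\}$ whose only role is to pad preference lists and to be each other's first choice in every stable matching. An even cycle has exactly two perfect matchings, and these are the two canonical states; the preference orientations are chosen so that one state (``$u_i$ in the cover'') yields one fewer first choice than the other, and so that the interface agents $v_i^1,v_i^3,v_i^5$ are matched to their first (cycle) choice precisely in the in-cover state. The proof that no cross-gadget pair $\{v_i^a,v_j^b\}$ can be in a stable matching is a short parity argument: removing an odd number of cycle vertices leaves an odd path, which cannot be perfectly matched. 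The threshold is $K=14n-k$; a cover of size $\le k$ gives $\ge 14n-k$ first choices, and conversely.

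So: keep your five-step scaffold, but replace the SAT source with cubic \textsc{Vertex Cover}, drop the clause gadgets, and use the 8-cycle-plus-pendants construction of Figure~\ref{fig:gadgets} for the vertex gadget.
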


Before presenting our proof, we define the problem we will be reducing from in the proof.

\begin{definition}[\cite{karp1972reducibility}]
Let $G = (V, E)$ be an arbitrary graph. If $C$ is a set of vertices, such that for each $\{u, v\} \in E$ either $u \in C$ or $v \in C$, $C$ is called a \emph{vertex cover} of $G$.
\end{definition}

\begin{definition}[\cite{karp1972reducibility}]
Let $G$ be a graph and $k$ a natural number. Let MIN-VC-DEC be the problem of deciding whether $G$ admits a vertex cover of size at most $k$. Similarly, let MIN-VC be the optimisation version of MIN-VC-DEC.
\end{definition}

MIN-VC-DEC is NP-hard, even in cubic graphs (graphs where each vertex is adjacent to exactly 3 other vertices) \cite{garey_simplified_1974}, \cite{maier1977note}.

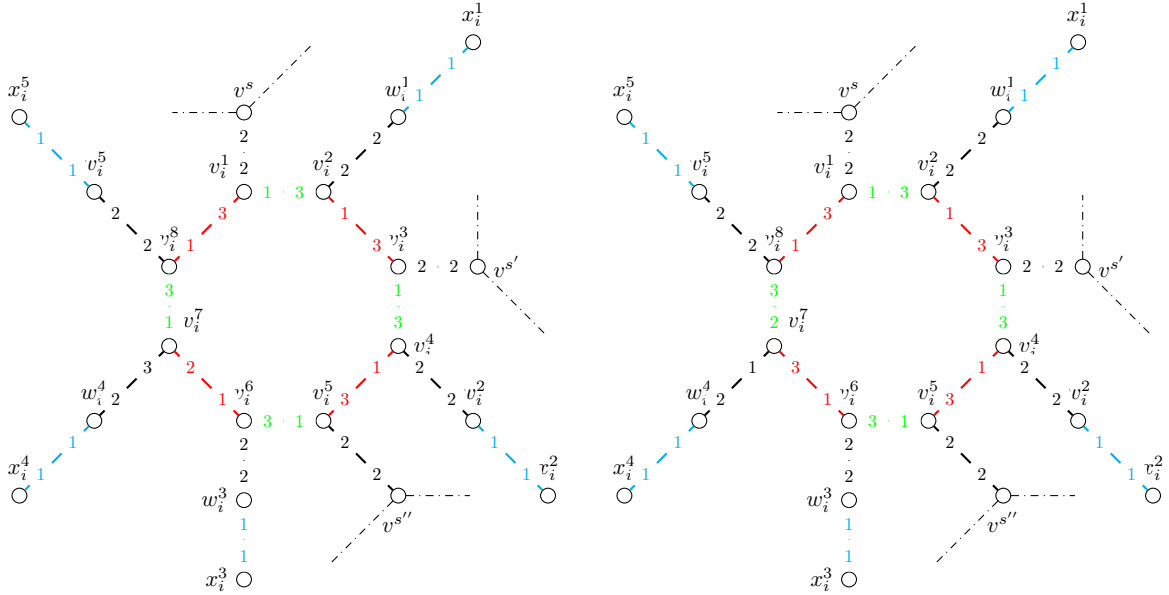
\begin{figure*}
    \centering
\begin{minipage}{0.48\textwidth}
    \begin{tikzpicture}[scale=0.85, transform shape]
]
\pgfmathsetmacro{\d}{2}
\node[roundnode, label=above left:$v^1_i$] (v1) {};
\node[roundnode] (v2) [right=of v1, label=$v^2_i$] {};
\node[roundnode] (v3) [below right=of v2, label=$v^3_i$] {};
\node[roundnode] (v4) [below=of v3, label=right:$v^4_i$] {};
\node[roundnode] (v5) [below left=of v4, label=$v^5_i$] {};
\node[roundnode] (v6) [left=of v5, label=above:$v^6_i$] {};
\node[roundnode] (v7) [above left=of v6, label=above right:$v^7_i$] {};
\node[roundnode] (v8) [above=of v7, label=$v^8_i$] {};

\node[roundnode] (vl) [above=of v1, label=$v^s$] {};
\node[roundnode] (vk) [right=of v3, label=right:$v^{s'}$] {};
\node[roundnode] (vj) [below right=of v5, label=below:$v^{s''}$] {};

\node[] (inv1) [left=of vl] {};
\node[] (inv2) [above=of vk] {};
\node[] (inv3) [below left=of vj] {};
\node[] (inv4) [above right=of vl] {};
\node[] (inv5) [below right=of vk] {};
\node[] (inv6) [right=of vj] {};

\node[roundnode] (w1) [above right=of v2, label=$w^1_i$] {};
\node[roundnode] (x1) [above right=of w1, label=$x^1_i$] {};

\node[roundnode] (w2) [below right=of v4, label=$w^2_i$] {};
\node[roundnode] (x2) [below right=of w2, label=$x^2_i$] {};

\node[roundnode] (w3) [below=of v6, label=left:$w^3_i$] {};
\node[roundnode] (x3) [below=of w3, label=left:$x^3_i$] {};

\node[roundnode] (w4) [below left=of v7, label=$w^4_i$] {};
\node[roundnode] (x4) [below left=of w4, label=$x^4_i$] {};

\node[roundnode] (w5) [above left=of v8, label=$w^5_i$] {};
\node[roundnode] (x5) [above left=of w5, label=$x^5_i$] {};

\draw[thick, green](v1) to node[edgestyle, near start]{1} node[edgestyle, near end]{3} (v2);

\draw[thick, red](v2) to node[edgestyle, near start]{1} node[edgestyle, near end]{3} (v3);

\draw[thick](v3) to node[edgestyle, near start]{2} node[edgestyle, near end]{2}  (vk);

\draw[thick, green](v3) to node[edgestyle, near start]{1} node[edgestyle, near end]{3} (v4);

\draw[thick, red](v4) to node[edgestyle, near start]{1} node[edgestyle, near end]{3} (v5);

\draw[thick](v5) to node[edgestyle, near start]{2} node[edgestyle, near end]{2} (vj);

\draw[thick, green](v5) to node[edgestyle, near start]{1}node[edgestyle, near end]{3}  (v6);

\draw[thick, red](v6) to node[edgestyle, near start]{1} node[edgestyle, near end]{2}  (v7);

\draw[thick, green](v7) to node[edgestyle, near start]{1} node[edgestyle, near end]{3}  (v8);

\draw[thick, red](v8) to node[edgestyle, near start]{1} node[edgestyle, near end]{3} (v1);

\draw[thick](v1) to node[edgestyle, near start]{2} node[edgestyle, near end]{2} (vl);

\draw[thick](v2) to node[edgestyle, near start]{2} node[edgestyle, near end]{2} (w1);
\draw[thick](v4) to node[edgestyle, near start]{2} node[edgestyle, near end]{2}  (w2);
\draw[thick](v6) to node[edgestyle, near start]{2} node[edgestyle, near end]{2}  (w3);
\draw[thick](v7) to node[edgestyle, near start]{3} node[edgestyle, near end]{2} (w4);
\draw[thick](v8) to node[edgestyle, near start]{2} node[edgestyle, near end]{2} (w5);

\draw[thick, cyan](w1) to node[edgestyle, near start]{1} node[edgestyle, near end]{1} (x1);
\draw[thick, cyan, opacity=70](w2) to node[edgestyle, near start]{1} node[edgestyle, near end]{1} (x2);
\draw[thick, cyan, opacity=70](w3) to node[edgestyle, near start]{1} node[edgestyle, near end]{1} (x3);
\draw[thick, cyan, opacity=70](w4) to node[edgestyle, near start]{1} node[edgestyle, near end]{1} (x4);
\draw[thick, cyan, opacity=70](w5) to node[edgestyle, near start]{1} node[edgestyle, near end]{1}  (x5);

\draw[dash dot](vl) to node[]{} (inv1);
\draw[dash dot](vl) to node[]{} (inv4);
\draw[dash dot](vj) to node[]{} (inv3);
\draw[dash dot](vj) to node[]{} (inv6);
\draw[dash dot](vk) to node[]{} (inv2);
\draw[dash dot](vk) to node[]{} (inv5);
\end{tikzpicture}
\end{minipage}
\begin{minipage}{0.48\textwidth}
\begin{tikzpicture}[scale=0.85, transform shape]
]
\pgfmathsetmacro{\d}{2}
\node[roundnode, label=above left:$v^1_i$] (v1) {};
\node[roundnode] (v2) [right=of v1, label=$v^2_i$] {};
\node[roundnode] (v3) [below right=of v2, label=$v^3_i$] {};
\node[roundnode] (v4) [below=of v3, label=right:$v^4_i$] {};
\node[roundnode] (v5) [below left=of v4, label=$v^5_i$] {};
\node[roundnode] (v6) [left=of v5, label=above:$v^6_i$] {};
\node[roundnode] (v7) [above left=of v6, label=above right:$v^7_i$] {};
\node[roundnode] (v8) [above=of v7, label=$v^8_i$] {};

\node[roundnode] (vl) [above=of v1, label=$v^s$] {};
\node[roundnode] (vk) [right=of v3, label=right:$v^{s'}$] {};
\node[roundnode] (vj) [below right=of v5, label=below:$v^{s''}$] {};

\node[] (inv1) [left=of vl] {};
\node[] (inv2) [above=of vk] {};
\node[] (inv3) [below left=of vj] {};
\node[] (inv4) [above right=of vl] {};
\node[] (inv5) [below right=of vk] {};
\node[] (inv6) [right=of vj] {};

\node[roundnode] (w1) [above right=of v2, label=$w^1_i$] {};
\node[roundnode] (x1) [above right=of w1, label=$x^1_i$] {};

\node[roundnode] (w2) [below right=of v4, label=$w^2_i$] {};
\node[roundnode] (x2) [below right=of w2, label=$x^2_i$] {};

\node[roundnode] (w3) [below=of v6, label=left:$w^3_i$] {};
\node[roundnode] (x3) [below=of w3, label=left:$x^3_i$] {};

\node[roundnode] (w4) [below left=of v7, label=$w^4_i$] {};
\node[roundnode] (x4) [below left=of w4, label=$x^4_i$] {};

\node[roundnode] (w5) [above left=of v8, label=$w^5_i$] {};
\node[roundnode] (x5) [above left=of w5, label=$x^5_i$] {};

\draw[thick, green](v1) to node[edgestyle, near start]{1} node[edgestyle, near end]{3} (v2);

\draw[thick, red](v2) to node[edgestyle, near start]{1} node[edgestyle, near end]{3} (v3);

\draw[thick](v3) to node[edgestyle, near start]{2} node[edgestyle, near end]{2}  (vk);

\draw[thick, green](v3) to node[edgestyle, near start]{1} node[edgestyle, near end]{3} (v4);

\draw[thick, red](v4) to node[edgestyle, near start]{1} node[edgestyle, near end]{3} (v5);

\draw[thick](v5) to node[edgestyle, near start]{2} node[edgestyle, near end]{2} (vj);

\draw[thick, green](v5) to node[edgestyle, near start]{1}node[edgestyle, near end]{3}  (v6);

\draw[thick, red](v6) to node[edgestyle, near start]{1} node[edgestyle, near end]{3}  (v7);

\draw[thick, green](v7) to node[edgestyle, near start]{2} node[edgestyle, near end]{3}  (v8);

\draw[thick, red](v8) to node[edgestyle, near start]{1} node[edgestyle, near end]{3} (v1);

\draw[thick](v1) to node[edgestyle, near start]{2} node[edgestyle, near end]{2} (vl);

\draw[thick](v2) to node[edgestyle, near start]{2} node[edgestyle, near end]{2} (w1);
\draw[thick](v4) to node[edgestyle, near start]{2} node[edgestyle, near end]{2}  (w2);
\draw[thick](v6) to node[edgestyle, near start]{2} node[edgestyle, near end]{2}  (w3);
\draw[thick](v7) to node[edgestyle, near start]{1} node[edgestyle, near end]{2} (w4);
\draw[thick](v8) to node[edgestyle, near start]{2} node[edgestyle, near end]{2} (w5);

\draw[thick, cyan](w1) to node[edgestyle, near start]{1} node[edgestyle, near end]{1} (x1);
\draw[thick, cyan](w2) to node[edgestyle, near start]{1} node[edgestyle, near end]{1} (x2);
\draw[thick, cyan](w3) to node[edgestyle, near start]{1} node[edgestyle, near end]{1} (x3);
\draw[thick, cyan](w4) to node[edgestyle, near start]{1} node[edgestyle, near end]{1} (x4);
\draw[thick, cyan](w5) to node[edgestyle, near start]{1} node[edgestyle, near end]{1}  (x5);

\draw[dash dot](vl) to node[]{} (inv1);
\draw[dash dot](vl) to node[]{} (inv4);
\draw[dash dot](vj) to node[]{} (inv3);
\draw[dash dot](vj) to node[]{} (inv6);
\draw[dash dot](vk) to node[]{} (inv2);
\draw[dash dot](vk) to node[]{} (inv5);
\end{tikzpicture}
\end{minipage}
\caption{The preference lists of the agents in Proofs of Theorems~\ref{thm:1stmax_hard}, \ref{thm:gen_hard_general} and \ref{thm:egal3_hard}. The gadget on the left contains preference lists for generous, the gadget on the right for egalitarian and 1st-choice-maximal. Green edges correspond to the case 1 (vertex in the cover) and red edges to the case 2 (vertex not in the cover). The blue edges are present in every stable matching.}\label{fig:gadgets}
\end{figure*}

We will now show 3-FC-DEC-SRI is NP-complere.

\begin{proof}\label{proof:1stmax_hard}
To show the membership in the class NP, we note that we can check the stability of a matching in polynomial time by looking at every potential blocking pair. Counting the number of agents who get their first choice can also be done in polynomial time. We proceed to show the NP-hardness by a reduction from MIN-VC-DEC in cubic graphs:

Let $G = (V,E)$ be an arbitrary cubic graph with vertex set $V = \{u_1, \dots , u_n \}$. We construct the following instance $I$ of SRI with preference lists at most length 3.

We create the sets of agents
\begin{align*}
V' = \{v^r_i : 1 \leq i \leq n \mbox{ and } 1 \leq r \leq 8 \} \\
X = \{x^r_i : 1 \leq i \leq n \mbox{ and } 1 \leq r \leq 5 \} \\
W = \{w^r_i : 1 \leq i \leq n \mbox{ and } 1 \leq r \leq 5 \}
\end{align*}

Let us call the agents corresponding to a vertex $u_i$, namely the agents $\{v^1_i, \dots, v^8_i, x^j_i, w^j_i | 1 \leq j \leq 5 \}$ the \emph{agent group} of $u_i$.

For each $u_i \in U$ give the vertex a counter $c_i$ and initialise it at -1.
For each edge $\{u_i, u_j\}$
\begin{enumerate}
    \item Increment $c_i$ and $c_j$ by 2.
    \item Add $v^{c_i}_i$ and $v^{c_j}_j$ as second choices in each others' preference lists.
\end{enumerate}

Because $G$ is a cubic graph, each $u_i$ is adjacent to 3 other vertices. This means that in the end, each $c^i$ will have the value 5 and each $v^j_i$, where $j \in \{1, 3, 5\}$ will have a second choice in their preference lists.

Their preference lists are in Figure~\ref{fig:preference_lists}, visualised on the right graph on Figure~\ref{fig:gadgets}. The figure is a graph-based representation of SRI: the vertices in the graphs are agents, the edges are acceptable pairs and the labels are preferences. For example the agent $v_i^1$:s first choice is the agent $v_i^2$, because $v_i^1$ and $v_i^2$ are connected by an edge and the label closer to $v_i^1$ is 1. 

\begin{figure}
    \centering
    $v^1_i$: $v^2_i$ $v^s$ $v^8_i$

$v^2_i$: $v^3_i$ $w^1_i$ $v^1_i$

$v^3_i$: $v^4_i$ $v^{s'}$ $v^3_i$

$v^4_i$: $v^5_i$ $w^2_i$ $v^3_i$

$v^5_i$: $v^6_i$ $v^{s''}$ $v^4_i$

$v^6_i$: $v^7_i$ $w^3_i$ $v^5_i$

$v^7_i$: $v^8_i$ $v^6_i$ $w^4_i$ 

$v^8_i$: $v^1_i$ $w^5_i$ $v^7_i$

$w^1_i$: $x^1_i$ $v^2_i$

$w^2_i$: $x^2_i$ $v^4_i$

$w^3_i$: $x^3_i$ $v^6_i$

$w^4_i$: $x^4_i$ $v^7_i$

$w^5_i$: $x^5_i$ $v^8_i$

$x^j_i$ : $w_j$ for $1 \leq j \leq 5$
    \caption{The preference lists of the agents in the proof of Theorem~\ref{thm:1stmax_hard}}
    \label{fig:preference_lists}
\end{figure}

\begin{claim}
If $G$ has a vertex cover $C$ of size $k\leq K$, then $I$ has a stable matching with at least $14n - K$ first choices.
\end{claim}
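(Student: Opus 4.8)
The plan is to take a vertex cover $C$ of $G$ with $|C| = k \le K$ and build an explicit stable matching $M$ of $I$, then count its first choices. Each agent group corresponding to a vertex $u_i$ is a self-contained gadget: the "core cycle" $v_i^1,\dots,v_i^8$ can be matched in one of two ways — the "green" perfect matching $\{v_i^1,v_i^2\},\{v_i^3,v_i^4\},\{v_i^5,v_i^6\},\{v_i^7,v_i^8\}$ (case: $u_i \in C$) or the "red" matching $\{v_i^2,v_i^3\},\{v_i^4,v_i^5\},\{v_i^6,v_i^7\}$ together with $\{v_i^8,v_i^1\}$ and, crucially, leaving $v_i^1$'s and... wait, actually in the red case $v_i^1$ must be matched elsewhere. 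So I first need to pin down exactly which edges each case uses and how the $w$'s, $x$'s, and the shared inter-group agents $v^s$ are matched. The blue edges $\{w_i^j, x_i^j\}$ are forced in every stable matching (the $x$'s have length-1 lists), so those are fixed. In the green case each $v_i^j$ with $j$ odd in $\{1,3,5,7\}$ gets its first choice along a green edge, and each $w_i^j$ gets its first choice $x_i^j$; the inter-group agents $v^s$ attached to $v_i^1$ (and similarly to $v_i^3, v_i^5$) are matched within the other group. In the red case, one agent in the cycle is "pushed out" to use its rank-2 edge to a shared vertex.

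Second, I would verify stability of the constructed $M$. This amounts to checking no acceptable pair blocks: the $x$--$w$ pairs are always matched to each other at rank 1, so they never block; for each $v_i^j$ I check its (at most three) list entries against who it is matched to, using the fact that the green/red alternation is exactly what the rank-1 labels on the cycle edges force (a green edge labelled $1$ at both... no, labelled $1$ near one end and $3$ near the other — this asymmetry is what makes exactly one of the two matchings on each "bad" pair stable). The inter-group second-choice edges $\{v_i^{c_i}, v_j^{c_j}\}$ encode the vertex cover constraint: such an edge blocks only if both $v_i^{c_i}$ and $v_j^{c_j}$ are matched to something they like less than rank 2, which, by the gadget design, happens exactly when neither $u_i$ nor $u_j$ is in $C$ — and $C$ being a cover rules this out. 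I'd also confirm that $A^*$ for $I$ is exactly the non-$x$... no, all agents including $x$'s are matched, so $A^* $ is everything, and this is consistent across the green/red choices (so the profile is well-defined).

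Third, count first choices. There are $14n$ agents total per... let me recount: per group, $8$ $v$'s, $5$ $w$'s, $5$ $x$'s, that's $18n$ agents, but the $3$ inter-group $v^s$ slots are shared, so the total is $18n - $ (something); the claim says "$14n - K$ first choices", so the bookkeeping is: in a group with $u_i \in C$ (green), a fixed large number $f_{\text{green}}$ of agents get rank 1; in a group with $u_i \notin C$ (red), $f_{\text{red}} = f_{\text{green}} - 1$ get rank 1 (one agent is bumped to its rank-2 edge); summing gives $n \cdot f_{\text{green}} - (n - k) = (\text{const}) - n + k$. Matching this to $14n - K$: if all groups were green we'd get $14n + \text{something}$... actually I'd just carefully tally, per group, which of the $18$ (or so) agents in the group are matched at rank 1 in each case, note the deficit is exactly $1$ per vertex not in the cover, then since we may harmlessly enlarge $C$ to have size exactly $K$ (or just bound: more green is better, so $k \le K$ gives at least $14n - K$), conclude the count is at least $14n - K$.

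The main obstacle I anticipate is getting the inter-group agents and the "pushed-out" agent in the red case exactly right: the shared vertices $v^s, v^{s'}, v^{s''}$ each appear in two different agent groups (they play the role of $v_i^{c_i}$ for one group and $v_j^{c_j}$ for another), so I must check that the green/red choices in the two groups sharing such a vertex are mutually consistent — a single physical agent cannot simultaneously be matched inside group $i$ and inside group $j$. The clean way is: a shared edge $\{v_i^{c_i}, v_j^{c_j}\}$ is used in $M$ iff \emph{both} endpoints' groups are red, otherwise each red group's pushed-out agent must be absorbed by a green neighbour — but a green group's agent $v_i^{c_i}$ is already matched inside its own cycle, so in fact a red group's cycle is matched so that $v_i^1$ (say) is free only when the partner across the shared edge is also free. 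This forces: along each edge $\{u_i,u_j\}$ of $G$, at least one of the two must be green, i.e. at least one endpoint in $C$ — precisely the vertex cover condition. I'd make this the load-bearing lemma of the proof and verify the rest is routine per-group counting.
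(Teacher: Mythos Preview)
Your proposal contains a genuine misunderstanding of the construction that propagates into both the stability argument and the count. The red case (vertex $u_i\notin C$) is \emph{also} a perfect matching of the 8-cycle: the pairs are $\{v_i^2,v_i^3\},\{v_i^4,v_i^5\},\{v_i^6,v_i^7\},\{v_i^8,v_i^1\}$, so $v_i^1$ is matched to $v_i^8$ inside its own group, not ``pushed out'' to a rank-2 inter-group edge. No inter-group edge is ever present in the constructed matching $M$, in either case. Consequently your entire fourth paragraph---the ``load-bearing lemma'' about consistency of shared agents across groups and when a shared edge is used---is addressing a nonexistent problem. The only role of the inter-group edges is the one you correctly identified in your second paragraph: they are potential blocking pairs, and since each such edge has rank 2 at both ends, it blocks iff both endpoints are matched at rank~3, which happens iff both groups are red, which the vertex-cover hypothesis rules out. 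That is the whole stability argument for those edges; there is nothing further to check.

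Your first-choice count is also inverted. In the green case ($u_i\in C$) the agents at rank~1 are $v_i^1,v_i^3,v_i^5$ together with the ten $w$'s and $x$'s, giving $13$. In the red case ($u_i\notin C$) the agents at rank~1 are $v_i^2,v_i^4,v_i^6,v_i^8$ plus the ten $w$'s and $x$'s, giving $14$. So red beats green, not the other way round, and the total is $13k+14(n-k)=14n-k\ge 14n-K$. Your statement ``more green is better, so $k\le K$ gives at least $14n-K$'' would, if taken literally, give an upper bound rather than the required lower bound. Once you fix the red-case matching and recount, the claim follows immediately; the paper's proof is exactly this direct tally plus the blocking-pair check you already sketched.
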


\begin{proof}
Add $\{x^j_i, w^j_i\}$ to the matching for $1 \leq j \leq 5$. If $u_i \in C$, match the pairs $\{v^1_i, v^2_i\}$, $\{v^3_i, v^4_i\}$, $\{v^5_i, v^6_i\}$ and  $\{v^7_i, v^8_i\}$. If $u_i \notin C$, match the pairs $\{v^2_i, v^3_i\}$, $\{v^4_i, v^5_i\}$, $\{v^6_i, v^7_i\}$ and $\{v^8_i, v^1_i\}$. Let us call the first one \textbf{case 1} and the second one \textbf{case 2}. In both cases, match the agents $w^j_i, x^j_i$ for $1 \leq j \leq 5$.

In case 1, where $u_i$ is in the vertex cover, $v^1_i$, $v^3_i$, $v^5_i$, $x^j_i$ and $w^j_i$ for $1 \leq j \leq 5$ are all matched to their first choices, so they cannot be involved in a blocking pair. $v^2_i$, $v^4_i$ and  $v^8_i$ all have their first and second choices among those and are matched to their second choice, so they cannot be within a blocking pair. $v^6_i$ would prefer $v^7_i$ to their current partner $v^5_i$, but $v^7_i$ prefers $v^8_i$ to $v^6_i$ so they also do not create a blocking pair.

In the case 2 where $u_i$ is not in the vertex cover, $v^2_i$, $v^4_i$, $v^6_i$, $v^8_i$, $x^j_i$ and $w^j_i$ are all matched to their first choices, so they cannot be involved in a blocking pair. $v^7_i$ would prefer $w_i$ and $v^8_i$ to their partner $v^6_i$, but both of them are matched to their first choices. The only possible blocking pairs are $(v^1_i, v^s_j)$, $(v^3_i, v^{s'}_k)$ and $(v^5_i, v^{s''}_l)$. But if any of $v^s_j$, $v^{s'}_k$ or $v^{s''}_l$ is not matched to their first choice, then by construction $u_j$, $u_k$ or $u_l$ is not in the vertex cover, yet there is an edge between $u_i$ and the vertex in question. This would mean $C$ is not a vertex cover, a contradiction. So the matching we have created is stable.

We can see that if a vertex is in the cover, 14 of the corresponding agents are matched to their first choices. If the vertex is not in the cover, 13 of the corresponding agents are matched to their first choices. If the vertex cover has $k$ vertices, then the matching has $14(n-k) + 13k = 14n - k$ first choices. Because $K \geq k$, $14n - k \geq 14n -K$, so the matching has at least $14n - K$ first choices.
\qed \end{proof}

\begin{claim}
If $I$ admits a stable matching of $M$ with at least $14n - K$ first choices, then $G$ has a vertex cover of size at most $K$.
\end{claim}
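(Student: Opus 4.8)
The plan is to invert the correspondence used in the previous claim. I would first record two structural facts that hold in \emph{every} stable matching $M$ of $I$. First, for every $i$ and every $j\in\{1,\dots,5\}$ the pair $\{w^j_i,x^j_i\}$ belongs to $M$: since $x^j_i$ finds only $w^j_i$ acceptable and $w^j_i$ ranks $x^j_i$ first, any $M$ not containing this pair is blocked by it. Hence these $10n$ pairs are fixed, they supply $2$ first choices each, and no $v$-agent is ever matched to a $w$- or $x$-agent. Second, every $v$-agent lies in $A^*$: applying the previous claim with the trivial vertex cover $C=V$ produces a stable matching in which all $v$-agents are matched, so by Theorem~4.5.2 of \cite{gusfield_book_1989} the $v$-agents are matched in $M$ as well.

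Next I would analyse $M$ one agent group at a time. Among $v^1_i,\dots,v^8_i$ the only acceptable pairs are the edges of the $8$-cycle $v^1_i v^2_i\cdots v^8_i v^1_i$, and only $v^1_i,v^3_i,v^5_i$ additionally have a ``cross'' edge, each to a $v$-agent of another group (these correspond to the edges of $G$). Let $S_i\subseteq\{v^1_i,v^3_i,v^5_i\}$ be the agents matched along a cross edge in $M$; the other $8-|S_i|$ agents of the group must be perfectly matched using $8$-cycle edges only. A short parity-and-connectivity argument then forces $S_i=\emptyset$: $|S_i|$ must be even, and removing any two of $v^1_i,v^3_i,v^5_i$ from the $8$-cycle either isolates a vertex or leaves two components of odd order, so the remaining agents cannot be perfectly matched. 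Therefore $M$ restricted to the $8$-cycle is one of its exactly two perfect matchings, i.e. each group is in ``case 1'' or ``case 2'' as in the previous claim, and no cross edge is ever used.

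Let $C=\{u_i : \text{group } i \text{ is in case 1}\}$. I would then verify $C$ is a vertex cover of $G$. For an edge $\{u_i,u_j\}$ of $G$ there is a cross edge $\{v^a_i,v^b_j\}$ with $a,b\in\{1,3,5\}$ which has rank $2$ for both of its endpoints; in case 2 each of $v^1_i,v^3_i,v^5_i$ is matched to a rank-$3$ partner, so if both groups $i$ and $j$ were in case 2 this cross edge would block $M$. Stability therefore forces at least one of $u_i,u_j$ into $C$. Finally I would count first choices: a group in case 1 contributes $13$ (the five $w$--$x$ pairs together with $v^1_i,v^3_i,v^5_i$) and a group in case 2 contributes $14$, so $M$ has exactly $14n-|C|$ first choices. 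From $14n-|C|\ge 14n-K$ we obtain $|C|\le K$, and $C$ is the required vertex cover.

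I expect the group-local structural step to be the main obstacle: ruling out all ``mixed'' configurations in a group --- a partial matching of the $8$-cycle combined with some cross edges, or an unmatched $v$-agent --- and distilling the situation down to exactly the two configurations of the previous claim. The parity argument on the $8$-cycle with a subset of $\{v^1_i,v^3_i,v^5_i\}$ deleted is the heart of it; the rest is routine verification with the rank labels from Figure~\ref{fig:gadgets}.
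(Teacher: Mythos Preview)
Your proposal is correct and follows essentially the same route as the paper: fix the $\{w^j_i,x^j_i\}$ pairs, use Theorem~4.5.2 of \cite{gusfield_book_1989} to force all $v$-agents to be matched, run the parity/isolation argument on the $8$-cycle with a subset of $\{v^1_i,v^3_i,v^5_i\}$ removed to conclude that each group is in case~1 or case~2, then define $C$ via case~1, verify it is a vertex cover using the rank-$2$ cross edges, and finish with the first-choice count $14n-|C|$. Your case analysis on $S_i$ is in fact a little tidier than the paper's explicit three-subcase enumeration, and your first-choice counts ($13$ for case~1, $14$ for case~2) agree with the paper's formula $14(n-k)+13k$.
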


\begin{proof}\label{proof:1stmax_hard_claim2}
Let us have a stable matching of $I$, with at least $14n - k$ first choices.

We first show that every agent is matched in any stable matching. We can always match the agents in a given group as in the case 1, and this matching will be stable. By \cite{gusfield_book_1989} the set of matched agents is the same for every stable matching, so we know that every agent must be matched for a matching to be stable.

We wish to show that the only possibly stable ways to match agents are the two cases described in the claim 1. $\{x^j_i, w^j_i\}$ are matched to each other in every stable matching, because they are each other's first choices. If we assume no agent is matched to an agent outside of their agent group, the agents' preferences form an even cyclic graph with eight vertices: the agents $v_i^j$ for $1 \leq j \leq n$. There are two ways to match the agents: \{$\{v^1_i, v^2_i\}$, $\{v^3_i, v^4_i\}$, $\{v^5_i, v^6_i\}$, $\{v^7_i, v^8_i\}$\} and  \{$\{v^2_i, v^3_i\}$, $\{v^4_i, v^5_i\}$, $\{v^6_i, v^7_i\}$, $\{v^8_i, v^1_i\}$\}. These are precisely the two cases from claim 1.

What is left to show is that no matching which matches agents to agents outside of their agent group can be stable.


We cannot have an odd number of vertices connected outside of the agent group, because then we would be left with an odd number of agents to match within the group. (As we have seen, $w^j_i$ and $x^j_i$ are always matched to each other.) If we match two vertices outside of the agent group, we have three cases:

\begin{enumerate}
    \item $v^1_i$ and $v^3_i$ are matched outside
    \item $v^1_i$ and $v^5_i$ are matched outside
    \item $v^3_i$ and $v^5_i$ are matched outside
\end{enumerate}

In the cases 1 and 3, there is only one vertex, either $v^2_i$ or $v^4_i$, between the vertices matched outside. This vertex does not have acceptable agents left, meaning it cannot be matched and the matching cannot be stable.

In the case 2, there are three agents between $v^1_i$ and $v^5_i$. This is an odd number of agents, meaning we cannot match all of them.


Let $C = \{u_i \in V : \{v^1_i, v^2_i\} \in M\}$, i.e $C$ is the set of vertices whose agent group is matched as in case 1.

We need to show $C$ is indeed a vertex cover. Assume, for contradiction, that there is an edge which does not have an endpoint in $C$. Then there are two agent groups of $I$, who have agents which find each other acceptable and are matched as in case 2. Let us call those agents $v^a_i$ and $v^b_j$, where $a, b \in \{1,3,5\}$. Because both agent groups are matched according to case 2, $v^a_i$ and $v^b_j$ are matched to their third choices. As they are each others second choices, they form a blocking pair. This means the matching is not stable, a contradiction.

Next, we need to show that $|C| \leq K$. Let $k = |C|$. Then the number of first choices in $M$ is $10n + 3k + 4(n-k) = 14n - k$. So if we were to assume that $k \geq K$ then $M$ has fewer than $14n - K$ first choices. But we assumed that $M$ has at least $14n - K$ first choices, a contradiction.
\qed \end{proof}

We have now shown that $G$ has a vertex cover $C$ of size $k \leq K$ if and only if $I$ has a stable matching with more than $14n - k$ first choices. Therefore, if we had a polynomial-time algorithm FC-DEC-SRI, we could use that to solve MIN-VC-DEC in polynomial time. This is impossible unless P=NP. Thus, FC-DEC-SRI is NP-complete.
\qed \end{proof}

\begin{definition}
Let $I$ be an instance of SRI. Then we call the problem of finding a rank-maximal stable matching \emph{rank-maximal} SRI.
\end{definition}

\begin{theorem}
Rank-maximal SRI is NP-hard even with preference lists of at most length 3.
\end{theorem}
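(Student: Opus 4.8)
The plan is to reduce from MIN-VC-DEC in cubic graphs again, reusing essentially the same gadget construction as in the proof of Theorem~\ref{thm:1stmax_hard} (the right-hand gadget in Figure~\ref{fig:gadgets}), with only very minor adjustments. First I would show membership of the decision version in NP: given a matching one checks stability and computes the profile in polynomial time, and comparing profiles lexicographically is also polynomial, so deciding whether a stable matching with profile lexicographically at least some target vector exists is in NP.

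For the hardness direction, I would take an arbitrary cubic graph $G=(V,E)$ and build exactly the instance $I$ from the proof of Theorem~\ref{thm:1stmax_hard}. The key observation I would exploit is that the analysis there already pinned down the \emph{entire} profile of every stable matching of $I$, not just its number of first choices: every stable matching corresponds to a choice, for each agent group, of case~1 or case~2, the $\{w^j_i,x^j_i\}$ pairs are forced, and a group in case~1 contributes a fixed multiset of ranks while a group in case~2 contributes a different fixed multiset. Concretely, if $k$ groups are in case~1 and $n-k$ are in case~2, then $p_1 = 14n-k$, and I would likewise record $p_2$ and $p_3$ as affine functions of $k$ (one can read these off the preference lists in Figure~\ref{fig:preference_lists}: each case-1 group has $14$ first, some number of second, some number of third choices, and each case-2 group has $13$ first choices and correspondingly more lower choices). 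Since $p_1 = 14n-k$ is strictly decreasing in the number of case-2 groups, maximising $p_1$ lexicographically first forces $k$ to be as large as possible, i.e. it forces the number of case-2 groups to be minimised; and as shown in the proof of the second claim of Theorem~\ref{thm:1stmax_hard}, a stable matching is obtained precisely when the set of case-1 vertices forms a vertex cover. Hence a rank-maximal stable matching of $I$ has $p_1 = 14n - \tau(G)$, where $\tau(G)$ is the minimum vertex-cover size, and $G$ has a vertex cover of size at most $K$ iff $I$ has a stable matching with at least $14n-K$ first choices iff the rank-maximal profile satisfies $p_1 \ge 14n-K$. This gives a polynomial-time reduction from MIN-VC-DEC, so rank-maximal SRI with lists of length at most~3 is NP-hard.

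The main thing to be careful about — and what I expect to be the only real obstacle — is the interaction between the lexicographic tie-breaking on $p_2, p_3, \dots$ and the correspondence with vertex covers. Because $p_1$ already determines $k$ uniquely once it is maximised, and because $p_2$ and $p_3$ are then automatically determined by $k$ as well (every case-1 group and every case-2 group contributes a fixed profile), there is in fact no residual freedom: all stable matchings with $k$ case-1 groups have identical profiles. So the lexicographic maximisation collapses to "maximise $k$", and no delicate argument about lower-ranked entries is needed. I would just state this explicitly to rule out the worry that maximising $p_1$ might be in tension with the subsequent coordinates. The rest is a routine restatement of the two claims from the proof of Theorem~\ref{thm:1stmax_hard}, now phrased in terms of the full profile rather than the first-choice count, together with the already-established fact (Theorem 4.5.2 of \cite{gusfield_book_1989}) that $A^*$ and hence the set of matched agents is the same in every stable matching, which is what licenses the clean case-1/case-2 dichotomy per agent group.
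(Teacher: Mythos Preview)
Your approach is correct in spirit but substantially more elaborate than the paper's. The paper gives a two-line argument: a rank-maximal stable matching is, by definition of lexicographic maximum, always a first-choice-maximal stable matching; hence any polynomial-time algorithm for rank-maximal SRI with lists of length at most~3 would immediately solve 3-FC-SRI, contradicting Theorem~\ref{thm:1stmax_hard}. There is no need to revisit the gadget, compute the full profile, or worry about the later coordinates $p_2,p_3$.

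Your route---recomputing the whole profile for every stable matching of the reduced instance and observing that all coordinates are affine in $k$---also works, and it does make explicit that the tie-breaking on $p_2,p_3$ cannot interfere. But that extra analysis buys nothing here: once you know a rank-maximal matching maximises $p_1$, you are done, whatever $p_2$ and $p_3$ look like.

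One slip to correct: you write that $p_1 = 14n-k$ is ``strictly decreasing in the number of case-2 groups'' and that maximising $p_1$ ``forces $k$ to be as large as possible''. Both have the sign backwards: $p_1$ is decreasing in $k$ (the number of case-1 groups, i.e.\ the vertex-cover size), so maximising $p_1$ forces $k$ to be as \emph{small} as possible, namely $k=\tau(G)$. Your final conclusion $p_1 = 14n - \tau(G)$ is nonetheless correct, so this is a presentational error rather than a structural one.
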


\begin{proof}
Assume, for contradiction, that rank-maximal SRI is polynomial-time solvable with preference lists at most length 3. Then we can use its algorithm to solve the FC SRI in polynomial time, as a rank-maximal stable matching is always a first-choice-maximal stable matching. But we know that FC-SRI does admit a polynomial-time algorithm unless P=NP. Therefore rank-maximal SRI does not admit a polynomial-time algorithm unless P=NP.
\qed \end{proof}

\subsection{Generous SRI with short preference lists}

\begin{definition}
Let $I$ be an instance of SRI. Then we call the problem of finding a generous stable matching \emph{generous SRI}.
\end{definition}

\begin{theorem}\label{thm:gen_hard_general}
Generous SRI is NP-hard even with preference lists of at most length 3.
\end{theorem}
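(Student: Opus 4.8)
The plan is to adapt the reduction used for Theorem~\ref{thm:1stmax_hard}, but with the left-hand gadget in Figure~\ref{fig:gadgets} rather than the right-hand one. The source problem is again MIN-VC-DEC in cubic graphs. Given a cubic graph $G=(V,E)$ with $|V|=n$, I would build the same agent groups $V'$, $W$, $X$ (with $v^r_i$ for $1\le r\le 8$, $w^r_i, x^r_i$ for $1\le r\le 5$), insert the inter-group edges $\{v^{c_i}_i, v^{c_j}_j\}$ as mutual second choices exactly as before (so each of $v^1_i, v^3_i, v^5_i$ picks up one external second choice per incident edge and ends up with a length-3 list), but use the preference lists corresponding to the left gadget: the differences from Figure~\ref{fig:preference_lists} are localised around $v^7_i, v^8_i, w^4_i$ — in the generous gadget the edge $v^7_iv^8_i$ is ranked $(1,3)$, the edge $v^6_iv^7_i$ is $(1,2)$, and $v^7_iw^4_i$ is ranked $3$ on $v^7_i$'s side and $2$ on $w^4_i$'s. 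The point of this retuning is that the two local matchings — \textbf{case 1} $\{v^1_iv^2_i, v^3_iv^4_i, v^5_iv^6_i, v^7_iv^8_i\}$ (vertex in cover) and \textbf{case 2} $\{v^2_iv^3_i, v^4_iv^5_i, v^6_iv^7_i, v^8_iv^1_i\}$ (vertex not in cover) — should remain the only two stable ways to match a group internally, with the parity argument (an odd number of external edges leaves an odd leftover; external edges at $v^1,v^3$ or $v^3,v^5$ strand $v^2$ or $v^4$; external edges at $v^1,v^5$ strand three agents) carrying over verbatim, but now the reverse-profile cost of case 2 must be strictly worse (more low-ranked, i.e. third-choice, assignments) than case 1, so that lexicographically minimising the reverse profile forces a minimum vertex cover.

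**Next I would prove the two-directional claim** analogous to Claims~1 and~2. For the forward direction: given a vertex cover $C$ with $|C|=k\le K$, match $\{x^j_i,w^j_i\}$ for all $i,j$, match case 1 for $u_i\in C$ and case 2 for $u_i\notin C$; argue stability exactly as in the proof of Theorem~\ref{thm:1stmax_hard} (in case 2 the only candidate external blocking pairs are $(v^1_i,\cdot),(v^3_i,\cdot),(v^5_i,\cdot)$ matched to third choices, and a blocking pair would exhibit an uncovered edge, contradicting that $C$ is a cover), then read off the reverse profile as a fixed vector plus a penalty of the form (some positive constant per $u_i\notin C$) in the lowest coordinate where case 1 and case 2 differ; conclude the reverse profile is lexicographically at most the value achieved by $|C|=K$. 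For the converse: take a generous stable matching $M$; as before every agent is matched (case 1 is always feasible and stable, and by \cite[Theorem~4.5.2]{gusfield_book_1989} the matched set is invariant), $\{w^j_i,x^j_i\}$ are always paired, the parity argument rules out external matchings, so every group is in case 1 or case 2; set $C=\{u_i : v^1_iv^2_i\in M\}$; the blocking-pair argument shows $C$ is a vertex cover; and the reverse-profile accounting shows $|C|\le K$ whenever the reverse profile is lexicographically no worse than the target — using that a generous matching minimises the reverse profile, so if $M$ is generous then $|M\text{'s uncovered set}|$ is minimised, i.e. $C$ is a minimum vertex cover, giving $|C|\le K$ precisely when $G$ has a cover of size $\le K$. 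Since generous SRI returns an optimal matching, this yields the decision reduction, and NP-hardness of generous SRI with lists of length $\le 3$ follows.

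**The main obstacle** — and the reason a new gadget is needed rather than reusing Figure~\ref{fig:preference_lists} — is getting the reverse-profile arithmetic to point the right way while keeping all lists length $\le 3$ and keeping case 1 and case 2 the unique stable internal configurations. For first-choice-maximal matchings one only needed case 1 to beat case 2 in the $p_1$ coordinate; for generous matchings the entire reverse profile $\langle p_L,\dots,p_1\rangle$ must be controlled, so I must check that the $w$–$x$ pairs and the externally-matched $v^1,v^3,v^5$ agents contribute identically in both cases at every rank \emph{above} the decisive coordinate, and that at the decisive coordinate case 2 is strictly worse per group; the left gadget's shifted ranks on the $v^7$–$v^8$–$w^4$ triangle are exactly what makes the count of third choices differ by one per uncovered vertex while leaving higher-rank counts untouched. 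I would verify this by tabulating, for a single agent group, the multiset of ranks $\{\rank(a,M(a))\}$ under each case and confirming the two multisets differ only in the number of $3$'s (say case 1 has one fewer). The stability verifications and the parity/strandedness lemma are then essentially identical to the already-written proof and I would state them as such rather than re-deriving them.
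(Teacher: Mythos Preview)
Your approach matches the paper's second proof of this theorem almost exactly: same source problem (MIN-VC in cubic graphs), same agent groups, same left-hand gadget, same two internal configurations, same parity/strandedness argument to rule out cross-group matchings, and the same idea of counting third choices to drive the reduction. The paper in fact routes the argument through an intermediate problem LC-DEC-SRI (decide whether there is a stable matching with at most $K$ $R$th choices, $R$ the minimum regret), but your direct reverse-profile argument is equivalent.

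There is one genuine slip, however: you have the direction of the third-choice count reversed. With the left gadget as you (correctly) described it, case~1 (vertex in cover) gives ranks $(1,3,1,3,1,3,1,3)$ on $v^1_i,\dots,v^8_i$, i.e.\ \emph{four} third choices, while case~2 (vertex not in cover) gives $(3,1,3,1,3,1,2,1)$, i.e.\ \emph{three} third choices and one second choice. So case~1 is the more expensive configuration at the top of the reverse profile, not case~2. This matters: if case~2 were worse, as you state, a generous matching would simply put every group in case~1, which is always stable regardless of $G$, and the reduction would collapse. The correct picture is that generous prefers case~2 everywhere, but stability forces the case-1 vertices to form a vertex cover (two adjacent case-2 groups produce a blocking pair between their externally-linked $v^{1/3/5}$ agents), so minimising the number of third choices is exactly minimising $|C|$. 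Once you flip the inequality---penalty per $u_i\in C$, not per $u_i\notin C$---your two-directional claim and the rest of the write-up go through verbatim. Note also that the two cases differ in both the rank-2 and rank-3 counts, not ``only in the number of $3$'s''; this is harmless for the lex-min comparison since the $3$'s are compared first, but you should state it accurately.
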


We present two different proofs for this theorem. The first proof follows closely the NP-hardness proof for egalitarian SRI with short preference lists from \cite[Theorem 1]{cseh_stable_2019}. The second proof uses an approach very similar to the proof of Theorem~\ref{thm:1stmax_hard}.

\subsubsection{The first proof of Theorem~\ref{thm:gen_hard_general}}

\begin{proof}
The proof in \cite{cseh_stable_2019} is a reduction from vertex cover in the cubic graphs. We construct an instance $I$ of SRI with preference lists at most length 3 as described in the proof.

The original proof showed $G$ has a vertex cover $C$ where $|C| \leq K$ if and only if $I$ admits a stable matching with cost at most $7m + 19n + K$. We additionally show that if $G$ has a vertex cover $C$ where $|C| \leq K$, then $p^r(M) \leq \langle K + 3n + m, n - K +m, 8n + 2m \rangle $, where $p^r(M)$ is the reverse profile of $M$. This implies that the problem of deciding whether $I$ admits a stable matching with reverse profile which is lexographically smaller than $\langle p_1, \dots, p_L \rangle$ is NP-complete, where $\langle p_1, \dots, p_L \rangle$ is an arbitrary profile and $L$ is the length of the longest preference list any agent has. The optimisation version of this problem is precisely generous SRI, so the result implies that generous SRI is NP-hard.
\\
 
For \cite[Claim 2]{cseh_stable_2019}, instead of claiming that if ``\textit{$G$ has a vertex cover $C$ such that $|C|=k\leq K$, then there is stable matching $M$ in $I$ such that $c(M) \leq K'$, where $K'=7m+19n+K$}" we claim that if $G$ has a vertex cover $C$ such that $|C|=k\leq K$, then there is stable matching $M$ in $I$ such that $p^r(M) \leq \langle K + 3n + m, n - K +m, 8n + 2m \rangle $.

For a fixed $i$, $V^c_i$ assigns 4 agents to their 1st choices and 4 to their 3rd choices, therefore adding $\langle 4, 0, 4 \rangle$ to the profile. Similarly $V^u_i$ adds $\langle 4, 1, 3 \rangle$ to the profile and $M^Z_i$ $\langle 4,0, 0 \rangle$. For a fixed $j$, both $E^1_j$ and $E^2_j$ add $\langle 2, 1, 1 \rangle$ to the profile.

In the matching $M$ we have $k$ copies of $V^c_j$, $(n-k)$ copies of $V^u_i$ and $m$ copies of $M^i_z$. We also know that there is a $E^1_j$ or $E^2_j$, but not both, for each edge. Therefore the profile is 

\begin{align*}
\langle 4, 0, 4 \rangle k + \langle 4, 1, 3 \rangle (n-k) + \langle 2, 1, 1 \rangle m + \langle 4,0, 0 \rangle n \\
= \langle 4k + 4(n-k) + 2m + 4n, n - k + m, 4k + 3(n-k) + m \rangle \\
= \langle 8n + 2m, n - k + m, 3n + m + k\rangle
\end{align*}

and the reverse profile
\[\langle 3n + m + k, n - k + m, 8n + 2m\rangle \leq \langle 3n + m + K, n - K + m, 8n + 2m\rangle\]

as required.

\cite[Claim 3]{cseh_stable_2019} states that ``\textit{If there is stable matching $M$ in $I$ such that $c(M) \leq K'$, where $K'=7m+19n+K$, then $G$ has a vertex cover $C$ such that $|C|=k\leq K$}". We replace $c(M) \leq K'$ with $p^r(M) \leq \langle 3n + m + K, n - K + m, 8n + 2m\rangle $. The details of the proof of the Claim 3 remain otherwise unchanged, as they rely on the construction which we did not alter.
\qed \end{proof}

\subsubsection{The second proof of Theorem~\ref{thm:gen_hard_general}}

We first define the problem of minimising the number of $R^{th}$ choices for a given instance of SRI, where $R$ is the regret of a minimum regret stable matching for that instance. We show that this problem is NP-hard for short preference lists. Consequently, we show that the generous SRI is NP-hard even when the preference lists are short.

\begin{definition}
Let $I$ be an instance of SRI and $K$ an integer. Let $R$ be the minimum regret of $I$. Then \emph{LC-DEC-SRI} is the problem of deciding whether $I$ admits a stable matching with at most $K$ $R^{th}$ choices and 0 choices higher than $R$. In other words, $p(M)=\langle p_1,\dots,p_L\rangle$ where $p_R\leq K$ and $ p_i=0$ for $R+1\leq i\leq L$ and $L$ is the maximum length of a preference list in $I$.
\end{definition}

\begin{definition}
Let \emph{LC-SRI} be the minimisation version of \emph{LC-DEC-SRI}.
\end{definition}

\begin{theorem}\label{thrm:generous_hard}
LC-DEC-SRI is NP-complete even when preference lists are at most length 3.
\end{theorem}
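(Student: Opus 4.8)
The plan is to prove NP-completeness by a Karp reduction from MIN-VC-DEC restricted to cubic graphs, reusing essentially the same gadget construction as in the proof of Theorem~\ref{thm:1stmax_hard} but now with the preference lists of the left-hand (generous) gadget in Figure~\ref{fig:gadgets}. For membership in NP, I would take a candidate stable matching $M$ as the certificate: stability of $M$ and the profile $p(M)$ are checkable in polynomial time, and the only subtlety is that the definition of LC-DEC-SRI refers to the minimum regret $R$ of $I$ without supplying it, so I would first compute $R$ in polynomial time using the $O(n^2)$ minimum-regret algorithm of Gusfield and Irving \cite{gusfield_book_1989}, and then verify $p_R(M)\le K$ and $p_i(M)=0$ for $R<i\le L$.

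For the hardness, given a cubic graph $G=(V,E)$ with $V=\{u_1,\dots,u_n\}$, I would build the instance $I$ exactly as in the proof of Theorem~\ref{thm:1stmax_hard} --- the agent groups $V',W,X$, and for each edge of $G$ a pair of mutual second choices among the agents $v^1_i,v^3_i,v^5_i$ --- but replace the preference lists of Figure~\ref{fig:preference_lists} by those of the left gadget; concretely only $v^7_i$'s list and the ranks on the edges $v^6_i v^7_i$, $v^7_i v^8_i$, $v^7_i w^4_i$ change. I would then quote the structural half of the proof of Theorem~\ref{thm:1stmax_hard} verbatim: $I$ admits a stable matching (match every group as case 1), so by \cite[Theorem 4.5.2]{gusfield_book_1989} every agent is matched in every stable matching; each pair $\{w^j_i,x^j_i\}$ is always matched; and by the parity argument no stable matching matches a $v$-agent outside its group, so every group is matched either as case 1 ($\{v^1_i,v^2_i\},\{v^3_i,v^4_i\},\{v^5_i,v^6_i\},\{v^7_i,v^8_i\}$) or as case 2 ($\{v^2_i,v^3_i\},\{v^4_i,v^5_i\},\{v^6_i,v^7_i\},\{v^8_i,v^1_i\}$). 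A direct check on the left gadget shows that case 1 puts exactly the four agents $v^2_i,v^4_i,v^6_i,v^8_i$ at their third choice and none at a second choice, whereas case 2 puts exactly the three agents $v^1_i,v^3_i,v^5_i$ at their third choice and $v^7_i$ at a second choice; in particular both cases use a third choice, every stable matching has regret exactly $3$, so $R=3$ and the clause ``$p_i=0$ for $i>R$'' is vacuous.

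The correctness directions then follow the two claims of Theorem~\ref{thm:1stmax_hard}. If $G$ has a vertex cover $C$ with $|C|=k\le K$, match group $i$ as case 1 iff $u_i\in C$: the within-group stability check is routine, and a cross-group blocking pair would consist of two second-choice-adjacent agents $v^a_i,v^b_j$ with $\{u_i,u_j\}\in E$ and both groups in case 2, contradicting that $C$ is a cover; so this matching is stable with $4k+3(n-k)=3n+k\le 3n+K$ third choices. Conversely, from a stable $M$ with at most $3n+K$ third choices, let $C$ be the set of groups matched as case 1; if an edge $\{u_i,u_j\}$ had both endpoints outside $C$ then its cross-group agents would both sit at their third choice and prefer each other, a blocking pair, so $C$ is a vertex cover, and since $M$ has exactly $3n+|C|$ third choices, $|C|\le K$. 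Thus $(G,K)\in$ MIN-VC-DEC iff $(I,\,3n+K)\in$ LC-DEC-SRI, which together with membership in NP proves the theorem.

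The main obstacle is arithmetic bookkeeping rather than anything conceptual: since the construction is a re-skin of Theorem~\ref{thm:1stmax_hard}, the genuinely new work is verifying that the left gadget really yields the per-group third-choice counts $4$ (case 1) and $3$ (case 2), so that minimising the number of $R$-th choices over stable matchings coincides with minimising a vertex cover of $G$ --- getting every edge label of the gadget right is where the care is needed. A secondary point to handle explicitly is the NP-membership argument, since one cannot even state the acceptance condition without first observing that $R$ is polynomial-time computable.
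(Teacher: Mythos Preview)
Your proposal is correct and follows essentially the same approach as the paper: the paper too reduces from MIN-VC-DEC on cubic graphs by reusing the gadget of Theorem~\ref{thm:1stmax_hard} with only $v^7_i$'s preference list altered (to $v^8_i\, v^6_i\, w^4_i$), quotes the structural argument that cases 1 and 2 are the only stable configurations of a group, and counts $4$ versus $3$ third choices to obtain the bijection $k\leftrightarrow 3n+k$. You are in fact slightly more careful than the paper on two points it glosses over: you explicitly compute $R$ for NP membership, and you explicitly verify that $R=3$ so the ``$p_i=0$ for $i>R$'' clause is vacuous.
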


\begin{proof}\label{proof:generousnphard2}
We show the membership in the class NP similarly to the proof of Theorem~\ref{thm:1stmax_hard}. Similarly to the number of first choices, we can also compute the profile of a matching in polynomial time. We prove
NP-hardness by a similar reduction from the minimum vertex cover in cubic graphs.

Let $G = (V,E)$ be an arbitrary cubic graph with vertex set $V = \{u_1, \dots,  u_n \}$.

We create the same set of agents as in the proof of Theorem~\ref{thm:1stmax_hard}. Their preference lists are identical, except for the agent $v^7_i$, whose preference lists are as follows: \[v^7_i : v^8_i v^6_i w^4_i\] for each $1 \leq i \leq n$. The preference lists are visualised on the left in Figure~\ref{fig:gadgets}. 

\begin{claim}
If $G$ has a vertex cover $C$ of size $k\leq K$, then $I$ has a stable matching with at most $3n + K$ third choices.
\end{claim}

\begin{proof}
Let case 1 and case 2 be as in Figure~\ref{fig:gadgets}.

We have shown in the proof of Theorem~\ref{thm:1stmax_hard} that both cases are stable when the preference lists of $v^7_i$ are $ v^6_i w^4_i v^8_i$. Since we only change $v^7_i$, any blocking pair must involve $v^7_i$. In case 1, $v^7_i$ is matched to their first choice, and cannot hence be involved in a blocking pair. In case 2, $v^7_i$ is matched to their second choice, and their first choice is matched to their first choice. Therefore $v^7_i$ cannot be involved in a blocking pair here either.

We can see that if a vertex is in the cover, four of the corresponding agents are matched to their third choices. If the vertex is not in the cover, three of the corresponding agents are matched to their third choices. If the vertex cover has $k$ vertices, then the matching has $4k +3(n-k) = 3n + k$ third choices. Because $K \leq k$, $3n + k \leq 3n +K$, so the matching has at most $3n + K$ third choices.
\qed \end{proof}

\begin{claim}
If $I$ admits a stable matching $M$ with at most $3n + K$ third choices, then $G$ has a vertex cover of size at most $K$.
\end{claim}

\begin{proof}
Let us have a stable matching $M$ of $I$, with at most $3n + K$ third choices.

We need to show that cases 1 and 2 are the only two stable ways to match agents. We have shown this to hold when the agent $v^7_i$ has different preference lists. Since $v^7_i$ is the only agent that is different, they would need to be matched to someone else in a different case. But the only agent $v^7_i$ is not matched to in cases 1 and 2 is $w^4_i$ who is matched to their first choice in every possible stable matching.

Let $C = \{u_i \in V : \{v^1_i, v^2_i\} \in M\}$, i.e $C$ is the set of vertices whose agent group is matched as in case 1. This is a vertex cover by the logic of the proof of Theorem~\ref{thm:1stmax_hard}.

Next, we need to show that $|C| \leq K$. Let $k = |C|$. Then the number of third choices in $M$ is $4k + 3(n-k) = 3n + k$. So if $k > K$ then $M$ has more than $3n + K$ third choices. But we assumed that $M$ has at most $3n + K$ third choices, a contradiction.
\qed \end{proof}

We have now shown that $G$ has a vertex cover $C$ of size $k \leq K$ if and only if $I$ has a stable matching with at most $3n + K$ third choices. Therefore, if we had a polynomial-time algorithm LC-DEC-SRI, we could use that to solve MIN-VC-DEC in polynomial time. This is impossible unless P=NP. Thus, LC-DEC-SRI is NP-complete.
\qed \end{proof}

Now we can prove Theorem~\ref{thm:gen_hard_general}.

\begin{proof}
Assume, for contradiction, that generous SRI is polynomial-time solvable with preference lists at most length 3. Then we can use its algorithm to solve the LC-SRI in polynomial time, as a generous stable matching is always a stable matching with a minimum number of last choices. But we know that LC-SRI does not admit a polynomial-time algorithm unless P=NP. Therefore generous SRI does not admit a polynomial-time algorithm unless P=NP.
\qed \end{proof}

\subsection{Egalitarian SRI with short preference lists}

\begin{theorem}[Theorem 1 from \cite{cseh_stable_2019}]\label{thm:egal3_hard}
Egalitarian SRI is NP-hard even when the preference lists are of at most length 3.
\end{theorem}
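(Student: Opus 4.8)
The plan is to establish NP-hardness by reusing, almost verbatim, the reduction from MIN-VC-DEC on cubic graphs that underlies Theorem~\ref{thm:1stmax_hard}, together with the right-hand gadget of Figure~\ref{fig:gadgets}. Membership of the natural decision version (``does $I$ admit a stable matching $M$ with $c(M)\le K'$?'') in NP is immediate: stability is checkable in polynomial time by inspecting every candidate blocking pair, and $c(M)=\sum_{a_i\in A^*}\rank(a_i,M(a_i))$ is computable in polynomial time. So the content is the hardness reduction, and I would take the instance $I$ produced from a cubic graph $G=(V,E)$ to be exactly the one built in the proof of Theorem~\ref{thm:1stmax_hard}: the agent groups $V',W,X$, the forced first-choice pairs $\{w^j_i,x^j_i\}$, and the rank-$2$ cross-group edges placed between $v^{c_i}_i$ and $v^{c_j}_j$ for each $\{u_i,u_j\}\in E$.

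From that proof I would import, unchanged, the three purely structural facts: (i) every agent of $I$ is matched in every stable matching, because ``case 1'' always yields a stable matching and the matched set is invariant by \cite[Theorem~4.5.2]{gusfield_book_1989}; (ii) inside each agent group the only configurations consistent with stability are ``case 1'' and ``case 2'' (any matching that uses a cross-group edge either strands an agent or leaves an odd number of agents within the group); and (iii) for a choice of per-group cases, the resulting matching is stable if and only if the set $C=\{u_i : \text{group } i \text{ is case 1}\}$ is a vertex cover of $G$ --- if an edge $\{u_i,u_j\}$ had both ends in case 2, the agents $v^{c_i}_i,v^{c_j}_j$ would each be matched to a rank-$3$ partner while ranking one another second, hence would block.

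The one new ingredient is the cost accounting. In any stable matching the pairs $\{w^j_i,x^j_i\}$ are all rank-$1$ edges, contributing a fixed $10n$ to $c(M)$, and the rank-$2$ cross-group edges are never used by a stable matching (the agents $v^1_i,v^3_i,v^5_i$ are matched within their group in both cases), so they contribute $0$. A direct tally of the ranks of $v^1_i,\dots,v^8_i$ then shows that a case-$1$ group is strictly more expensive than a case-$2$ group --- the ``saving'' that case 2 enjoyed in the first-choice count of Theorem~\ref{thm:1stmax_hard} is also a cost saving here --- the difference being exactly $1$. Hence, if $k=|C|$, then $c(M)=\alpha n+k$ for the fixed constant $\alpha$ equal to the cost of a single case-$2$ group. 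Combining with fact (iii): $G$ has a vertex cover of size at most $K$ if and only if $I$ admits a stable matching $M$ with $c(M)\le \alpha n+K$. Since MIN-VC-DEC is NP-hard on cubic graphs and $I$ has preference lists of length at most $3$, the decision version of egalitarian SRI with such lists is NP-complete, so egalitarian SRI is NP-hard even under this restriction. (Equivalently, one may simply invoke \cite[Theorem~1]{cseh_stable_2019}, whose own reduction gives the same conclusion directly.)

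I expect the main obstacle to be nailing the direction of the per-group cost comparison: the reduction is only correct because a case-$1$ group costs strictly more than a case-$2$ group, so that minimising $c(M)$ pins $k$ to its smallest value (a \emph{minimum} vertex cover) rather than its largest. Verifying this requires care precisely at the agent $v^7_i$, the single place in the right-hand gadget where cases 1 and 2 do not merely trade a rank-$1$ partner for a rank-$3$ one; should the arithmetic come out the wrong way, the fix is to adjust $v^7_i$'s list (or pad it with one extra rank-filler agent) so as to guarantee a strict gap of the correct sign. The secondary thing to re-verify is that no stable matching can use a cross-group edge at all, but this is already settled by the parity/stranding argument in the proof of Theorem~\ref{thm:1stmax_hard}.
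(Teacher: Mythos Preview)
Your proposal is correct and follows essentially the same route as the paper's own proof: the paper also reuses the right-hand gadget of Figure~\ref{fig:gadgets} and the structural facts (i)--(iii) from the proof of Theorem~\ref{thm:1stmax_hard}, then computes per-group costs of $27$ for case~1 and $26$ for case~2, yielding $c(M)=26n+k$ and hence the equivalence with a vertex cover of size at most $K$. Your hedging about possibly needing to adjust $v^7_i$ is unnecessary --- with the right-hand gadget's list for $v^7_i$ the arithmetic already gives case~1 strictly more expensive by exactly~$1$ --- but the rest matches the paper's argument.
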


Theorem~\ref{thm:egal3_hard} was proven in \cite{cseh_stable_2019}. We now present an alternative proof similar to the proofs of Theorem~\ref{thm:1stmax_hard} and ~\ref{thm:gen_hard_general}.

We define the decision version of egalitarian SRI as follows.

\begin{definition}
Let $I$ be an arbitrary instance of SRI and $k$ an integer. We call the problem of deciding whether $I$ admits a stable matching $M$ with $c(M) \leq k$ EGAL-SRI-DEC.
\end{definition}

We prove that EGAL-SRI-DEC is NP-complete even when the preference lists are of at most length 3. This implies that egalitarian SRI is NP-hard even when the preference lists are of at most length 3.

\begin{proof}\label{proof:egalitarian}
We show the membership to the class NP similarly to the proof of Theorem~\ref{thm:1stmax_hard}. Similarly to the number of first choices, we can also compute the cost of a matching in polynomial time.

We prove the NP-completeness by the reduction from MIN-VC-DC in cubic graphs we used in the proof of Theorem~\ref{thm:1stmax_hard}, presented in Figure~\ref{fig:gadgets}.

Let $G = (V, E)$ be an arbitrary cubic graph with vertex set $V = \{u_1, \dots , u_n \}$ and $I$ the SRI instance reduced from it. We claim that $G$ has a vertex cover of size at most $K$ if and only if $I$ has a stable matching of cost at most $26n + K$.


\begin{claim}
If $G$ has a vertex cover $C$ of size $k\leq K$, then $I$ has a stable matching with at most cost $26n+K$
\end{claim}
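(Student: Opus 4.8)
The plan is to reuse the construction from the proof of Theorem~\ref{thm:1stmax_hard} (the right-hand gadget in Figure~\ref{fig:gadgets}) essentially verbatim, since the preference lists for the egalitarian case are identical to those used for the first-choice-maximal case. By the arguments already given in the proof of Theorem~\ref{thm:1stmax_hard} (and reiterated in the proof of Theorem~\ref{thrm:generous_hard}), the only stable ways to match each agent group of $u_i$ are case~1 (vertex in the cover) and case~2 (vertex not in the cover), and the pairs $\{x^j_i,w^j_i\}$ are fixed in every stable matching. So once we establish the claim for this claim-statement and its converse, the NP-completeness of EGAL-SRI-DEC for length-3 lists, and hence the NP-hardness of egalitarian SRI for length-3 lists, follows exactly as in the earlier theorems.

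First I would compute, from Figure~\ref{fig:gadgets} (right gadget), the contribution of a single agent group to the cost $c(M)$ in each of the two cases. Taking the matching from the proof of Claim~1 of Theorem~\ref{thm:1stmax_hard}: the five pairs $\{x^j_i,w^j_i\}$ contribute $10$ (each of $x^j_i,w^j_i$ ranks its partner first). In case~1 the $v$-agents split as four first choices ($v^1_i,v^3_i,v^5_i,v^7_i$ matched to their partners at rank 1) plus... — here I would just read off the ranks from the figure for each of $v^1_i,\dots,v^8_i$ in each case and sum. The upshot I expect is that case~1 contributes some constant $\alpha$ per group and case~2 contributes $\alpha+1$ per group (one extra unit of cost for a vertex \emph{not} in the cover, mirroring the "$14n-k$ first choices" bookkeeping), so that a vertex cover $C$ of size $k$ yields a stable matching $M$ with $c(M) = \alpha k + (\alpha+1)(n-k) + 10n\cdot(\text{something}) = (\text{constant})\,n + (n-k)$. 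Matching this against the target bound $26n+K$ pins down the constant and gives $c(M) = 27n - k \le 27n - \ldots$; I would then reconcile the exact additive constant with the stated "$26n+K$" by a careful recount (the stated bound presumably is $c(M)\le 26n+K$ with the per-group constants arranged so that $c(M)=26n+(n-k)+\dots$ — the precise arithmetic is routine but must be done against the figure). Since $k\le K$ gives $n-k\ge n-K$... wait, that goes the wrong way for an upper bound, so in fact the non-cover case must be the \emph{cheaper} one, meaning $c(M)=26n+k\le 26n+K$; I would confirm the direction when doing the count.

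Concretely, the steps are: (1) state that $I$ is the instance built in the proof of Theorem~\ref{thm:1stmax_hard}; (2) given the vertex cover $C$ with $|C|=k\le K$, define $M$ by putting each agent group of $u_i$ in case~1 if $u_i\in C$ and case~2 otherwise, together with all pairs $\{x^j_i,w^j_i\}$; (3) invoke the stability argument from the proof of Theorem~\ref{thm:1stmax_hard} verbatim — the only cross-group blocking pairs would be $(v^1_i,v^s_j),(v^3_i,v^{s'}_k),(v^5_i,v^{s''}_l)$, and these are ruled out precisely because $C$ is a vertex cover; (4) sum the per-group cost contributions and the fixed $\{x,w\}$ contributions to get $c(M)=26n+k\le 26n+K$.

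The main obstacle is purely bookkeeping: getting the per-group rank contributions exactly right from Figure~\ref{fig:gadgets} so that the totals land on the claimed $26n+K$ bound, and confirming the sign so that the cover vertices (not the non-cover vertices) are the ones carrying the extra cost. There is no conceptual difficulty — the stability analysis is inherited wholesale from Theorem~\ref{thm:1stmax_hard}, and the converse direction (a stable matching of cost $\le 26n+K$ yields a cover of size $\le K$, stated as the next claim in the paper) will run the same way: every stable matching restricts to case~1 or case~2 on each group, the case-2 groups form a vertex cover by the blocking-pair argument, and the cost formula $c(M)=26n+k$ forces $k\le K$.
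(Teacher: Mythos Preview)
Your plan is correct and matches the paper's argument exactly: build $M$ from $C$ using case~1 for $u_i\in C$ and case~2 otherwise, import the stability proof from Theorem~\ref{thm:1stmax_hard} verbatim, and count costs per group. The deferred arithmetic comes out as you finally guessed: case~1 contributes $27$ (thirteen first choices, one second choice, four third choices) and case~2 contributes $26$ (fourteen first choices, four third choices), so $c(M)=27k+26(n-k)=26n+k\le 26n+K$.

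One small slip in your closing remarks on the converse: it is the \emph{case~1} groups (not case~2) that form the vertex cover, since case~1 corresponds to $u_i\in C$; the blocking-pair argument shows two adjacent case~2 groups would block, forcing at least one endpoint of every edge into case~1.
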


\begin{proof}
The cases 1 and 2 are defined as in proof of Theorem~\ref{thm:1stmax_hard} and presented in Figure~\ref{fig:gadgets}.

The reasoning for cases 1 and 2 being stable is identical to the proof of Theorem~\ref{thm:1stmax_hard}.

We can see that in both cases 1 and 2, 4 agents are matched to their third choices. In case 1 (vertex in the cover), 3 + 10 agents are matched to their first choices and 1 to their second choice. In case 2 (vertex not in the cover), 4+10 agents are matched to their first choices and none to their second choice. Therefore each case 1 adds $4\times 3 +2 + 13 = 27$ and each case 2 adds $4 \times 3 + 14  = 26$ to the cost. If the vertex cover has $k$ vertices, then the matching has $27k +26(n-k) = 26n + k$ third choices. Because $K \leq k$, $26n + k \leq 26n +K$, so the matching has cost at most $26n + K$s.
\qed \end{proof}

\begin{claim}
If $I$ admits a stable matching $M$ with $c(M) \leq 26n + K$, then $G$ has a vertex cover of size at most $K$.
\end{claim}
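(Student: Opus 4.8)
The plan is to follow Claim~2 in the proof of Theorem~\ref{thm:1stmax_hard} essentially verbatim, since $I$ is produced by exactly that reduction (the right-hand gadget of Figure~\ref{fig:gadgets}), so every structural fact about stable matchings of $I$ established there is available for free. First I would recall two such facts. (i) Every agent of $I$ is matched in every stable matching: as noted in the proof of Theorem~\ref{thm:1stmax_hard}, matching every agent group as in case~1 yields a stable matching in which all agents are matched, and by \cite[Theorem 4.5.2]{gusfield_book_1989} the set of matched agents is then the same in every stable matching. (ii) On each agent group, the only stable configurations are case~1 and case~2: the pairs $\{w^j_i, x^j_i\}$ are forced as mutual first choices, no agent can be matched to an agent outside its own group (the dead-end/parity arguments of Theorem~\ref{thm:1stmax_hard} ruling out $v^1_i$--$v^3_i$, $v^1_i$--$v^5_i$ and $v^3_i$--$v^5_i$ cross-group pairs), and then only the two perfect matchings of the $8$-cycle $v^1_i,\dots,v^8_i$ survive.

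Given a stable matching $M$ of $I$, I would then set $C=\{u_i\in V:\{v^1_i,v^2_i\}\in M\}$ --- the vertices whose agent group is in case~1 --- and argue, exactly as in the proof of Theorem~\ref{thm:1stmax_hard}, that $C$ is a vertex cover of $G$: if some edge $\{u_i,u_j\}$ had neither endpoint in $C$, then the two agents $v^a_i,v^b_j$ (with $a,b\in\{1,3,5\}$) that this edge contributes to the construction would both be matched to their third choices while being each other's second choices, hence a blocking pair, contradicting stability.

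Finally I would do the cost accounting, which is where all the (routine) work sits. As computed in Claim~1, a case-1 agent group contributes $4\cdot 3 + 2 + 13 = 27$ to $c(M)$ and a case-2 agent group contributes $4\cdot 3 + 14 = 26$; since by fact~(ii) $M$ restricts to exactly one of these two local configurations on each of the $n$ groups, $c(M) = 27|C| + 26(n-|C|) = 26n + |C|$. Hence $c(M)\le 26n + K$ forces $|C|\le K$, so $G$ has a vertex cover of size at most $K$. Together with Claim~1 and the NP-hardness of MIN-VC-DEC in cubic graphs, this shows EGAL-SRI-DEC is NP-complete even with preference lists of length at most $3$, hence egalitarian SRI with such lists is NP-hard.

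The main obstacle is not conceptual: because the construction here is literally the one from Theorem~\ref{thm:1stmax_hard}, the delicate step --- that cases~1 and~2 are the \emph{only} stable local configurations --- is already discharged there and nothing needs to be redone. What must be checked carefully is only the per-group cost tally ($27$ versus $26$) under the right-hand gadget's preference lists, and the observation that the total cost is an exact affine function of $|C|$, which is exactly what makes the reduction tight in both directions.
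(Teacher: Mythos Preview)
Your proposal is correct and follows essentially the same approach as the paper's own proof: both invoke the structural facts from Theorem~\ref{thm:1stmax_hard} to reduce every stable matching to case~1/case~2 on each agent group, define $C$ identically, reuse the vertex-cover argument verbatim, and then conclude via the cost identity $c(M)=27|C|+26(n-|C|)=26n+|C|$. If anything, you spell out the per-group cost tally and the appeal to \cite[Theorem 4.5.2]{gusfield_book_1989} more explicitly than the paper does.
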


\begin{proof}
Let us have a stable matching of $I$, with the cost at most $26n + K$.

By the logic from the proof of Theorem~\ref{thm:1stmax_hard}, cases 1 and 2 are the only possible stable ways to match the agents.

Similarly, define $C = \{u_i \in V : \{v^1_i, v^2_i\} \in M\}$, i.e. the vertices that are matched as in the case 1. $C$ is a vertex cover by the logic from the proof of Theorem~\ref{thm:1stmax_hard}.

Next, we need to show that $|C| \leq K$. Let $k = |C|$. Then the cost of the matching is $26n + k$. If we were to have $k > K$ then $M$ would have cost $26n + k > 26n + K$. But we assumed that $M$ has a cost at most $26n + K$, a contradiction. So $k \leq K$.
\qed \end{proof}

 \end{proof}

\section{Approximability results}\label{sec:approximability_results}
In this section we first describe an inapproximability result for FC-SRI and then an approximation algorithm for the LC-SRI.

\subsection{Inapproximability and W[1] hardness of FC-SRI}

We show that first-choice-maximal SRI is inapproximable by any constant factor. This means that we are unlikely to find a meaningful approximation algorithm with a constant performance guarantee for rank-maximal SRI: it is not clear how one would approximate profiles, but since rank-maximal SRI maximises the number of first choices over all the stable matchings one would assume that a meaningful measure would restrict the number of first choices.

We first reduce from maximum independent vertex set to SRI. This reduction is inspired by a reduction from MIN-VC by Cooper et al. \cite{cooper_phd}.

Our reduction can also be used to show that FC-SRI is not in FPT with respect to the number of first choices, making an efficient exact algorithm difficult to find. We show an exact polynomial-time-algorithm when the number of first choices is treated as a constant.

\begin{definition}[\cite{tarjan1977finding}]
Let $G$ be a graph. If $C$ is a set of vertices such that no pair of vertices in $C$ are adjacent, $C$ is called an \emph{independent set} of $G$.
\end{definition}

\begin{definition}[\cite{tarjan1977finding}]
Let $G$ be an arbitrary graph and $k$ a natural number. Let MAX-IS-DEC be the problem of deciding whether $G$ admits an independent set of size at least $k$. Similarly, let MAX-IS be the optimisation version of MAX-IS-DEC, i.e. $k$ is omitted from the problem input and we are looking for the biggest $k$ such that $G$ admits an independent set of size $k$.
\end{definition}

MAX-IS on general graphs is NP-hard. Moreover, it is not approximable by any constant factor \cite{hstad1996clique}.

For an instance $G = (V, E)$, where $V = \{u_1, \dots, u_n\}$, construct an instance $I$ of SRI with agents 
\begin{align*}
V' = \{v_i : 1 \leq i \leq n\} \\
W = \{w_i : 1 \leq i \leq n\} \\
X = \{x_i : 1 \leq i \leq n\} \\
Y = \{y_i : 1 \leq i \leq n\} \\
A = \{a_i : 1 \leq i \leq n\} \\
B = \{b_i : 1 \leq i \leq n\}
\end{align*}

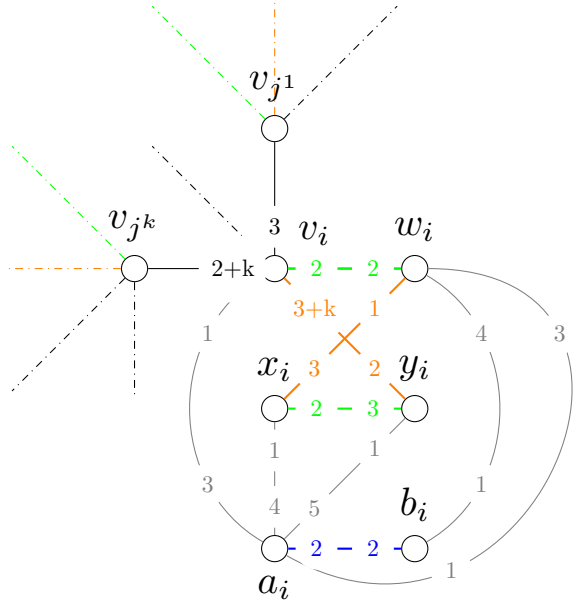
\begin{figure}
    \centering
    \begin{tikzpicture}[scale=1.5, transform shape]
]
\pgfmathsetmacro{\d}{2}
\node[roundnode, label=above right:$v_i$] (v) {};
\node[roundnode] (w) [right=of v,label=$w_i$] {};
\node[roundnode] (x) [below=of v,label=$x_i$] {};
\node[roundnode] (y) [right=of x,label=$y_i$] {};
\node[roundnode] (a) [below=of x,label=below:$a_i$] {};
\node[roundnode] (b) [right=of a,label=$b_i$] {};

\node[roundnode] (v1) [above=of v,label=$v_{j^1}$] {};
\node[] (vinv) [above left=of v] {};
\node[roundnode] (vk) [left=of v,label=$v_{j^k}$] {};

\node[] (inv1) [above left=of v1] {};
\node[] (inv2) [above=of v1] {};
\node[] (inv3) [above right=of v1] {};
\node[] (inv4) [above left=of vk] {};
\node[] (inv5) [left=of vk] {};
\node[] (inv6) [below left=of vk] {};
\node[] (inv7) [below=of vk] {};

\draw[gray](v) to[out=210,in=150,distance=1cm] node[edgestylesmall, near start]{1} node[edgestylesmall, near end]{3} (a);
\draw[thick, green](v) to node[edgestylesmall, near start]{2} node[edgestylesmall, near end]{3} (w);
\draw[](v) to node[edgestylesmall, near start]{3} (v1);
\draw[dash dot](v) to node[]{} (vinv);
\draw[](v) to node[edgestylesmall, near start]{2+k} (vk);
\draw[thick, orange](v) to node[edgestylesmall, near start]{3+k} node[edgestylesmall, near end]{2} (y);

\draw[thick, orange](w) to node[edgestylesmall, near start]{1} node[edgestylesmall, near end]{3} (x);
\draw[thick, green](w) to node[edgestylesmall, near start]{2} node[edgestylesmall, near end]{2} (v);
\draw[gray](w) to[out=0,in=-30,distance=2.5cm] node[edgestylesmall, near start]{3} node[edgestylesmall, near end]{1} (a);
\draw[gray](w) to[out=-30,in=30,distance=1cm] node[edgestylesmall, near start]{4} node[edgestylesmall, near end]{1} (b);

\draw[gray](x) to node[edgestylesmall, near start]{1} node[edgestylesmall, near end]{4} (a);
\draw[thick, green](x) to node[edgestylesmall, near start]{2} node[edgestylesmall, near end]{3} (y);

\draw[gray](y) to node[edgestylesmall, near start]{1} node[edgestylesmall, near end]{5} (a);

\draw[thick, blue](a) to node[edgestylesmall, near start]{2} node[edgestylesmall, near end]{2} (b);

\draw[dash dot,green](v1) to node[]{} (inv1);
\draw[dash dot,orange](v1) to node[]{} (inv2);
\draw[dash dot](v1) to node[]{} (inv3);
\draw[dash dot,green](vk) to node[]{} (inv4);
\draw[dash dot,orange](vk) to node[]{} (inv5);
\draw[dash dot](vk) to node[]{} (inv6);
\draw[dash dot](vk) to node[]{} (inv7);

\end{tikzpicture}
\caption{The preference lists for the proof of Theorem~\ref{thm:1stmax-nonapprox}. The grey edges are not present in the reduced preference list and hence not in any stable matching.}\label{fig:nonapprox_graph}
\end{figure}

These agents have the following preferences:\\
$v_i$ : $a_i$ $w_i$ [$v_j$ for $1 \leq j \leq n $ if $\{u_i, u_j\} \in E$] $y_i$ \\
$w_i$ : $x_i$ $v_i$ $a_i$ $b_i$ \\
$x_i$ : $a_i$ $y_i$ $ w_i$\\
$y_i$: $a_i$ $v_i$ $x_i$\\
$a_i$ : $w_i$ $b_i$ $v_i$ $x_i$ $y_i$ \\
$b_i$ : $w_i$ $a_i$ \\

We visualise them in Figure~\ref{fig:nonapprox_graph}.
\begin{claim}\label{claim:nonapprox1}
If $G$ has an independent vertex set $S$ of size $k \geq K$ then $I$ has a stable matching $M$ with at least $K$ first choices.
\end{claim}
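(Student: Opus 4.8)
The plan is to construct, from an independent set $S$ of size $k \geq K$, a stable matching $M$ of $I$ in which the agents $v_i$ with $u_i \in S$ are matched to their first choices $a_i$, and to count the total number of first choices. The gadget on each vertex $u_i$ should be matched in one of two "modes" depending on whether $u_i \in S$. First I would fix the intended matching on the auxiliary agents that behave uniformly: looking at the preference lists, $a_i$'s first choice is $w_i$, and $w_i$'s first choice is $x_i$, so these two cannot both be satisfied simultaneously; the reduction is evidently built so that exactly one of the pairs $\{v_i, a_i\}$, $\{w_i, x_i\}$ is chosen per gadget, with the remaining agents ($y_i$, $b_i$, and whichever of $v_i, w_i, x_i$ is left over) matched among themselves or to $a_i$.

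Next I would spell out the two modes precisely. \textbf{In-mode} (for $u_i \in S$): match $\{v_i, a_i\}$, $\{w_i, x_i\}$, and then $y_i$ must go with $b_i$? — no, $y_i$ and $b_i$ do not find each other acceptable, so instead $\{x_i, y_i\}$ cannot both hold either; the clean option is $\{w_i, x_i\}$, $\{v_i, a_i\}$, $\{b_i, \text{?}\}$. Since $b_i$ only accepts $w_i$ and $a_i$, both of which are taken, $b_i$ is unmatched, and $y_i$ only accepts $a_i, v_i, x_i$, all taken, so $y_i$ is unmatched — this cannot be stable unless those leftover agents form no blocking pair, which needs checking. \textbf{Out-mode} (for $u_i \notin S$): match $\{v_i, y_i\}$, $\{w_i, x_i\}$, $\{a_i, b_i\}$; here everyone in the gadget is matched. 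I would verify that out-mode is internally stable (the only danger is $\{v_i, a_i\}$ since $a_i$ prefers $v_i$ to $b_i$, but in out-mode $v_i$ is matched to $y_i$ which it... prefers less than $a_i$, so this IS a blocking pair unless $v_i$ is matched to someone it prefers to $a_i$ — there is no such agent). This tension is exactly the crux: $\{v_i,a_i\}$ blocks whenever $v_i\notin$ in-mode, so I actually need every gadget in in-mode, which contradicts having a nontrivial $S$. Resolving this — i.e. figuring out the genuinely intended matching, almost certainly one where $v_i$ points through its list of neighbours $v_j$ and gets matched to a neighbour outside $S$ — is the main obstacle, and I expect the independent-set condition to enter precisely here: $v_i$ (for $u_i \notin S$) is matched to some $v_j$ with $u_j \notin S$ adjacent to $u_i$, which is consistent because $S$ independent means every edge has an endpoint outside $S$, and the in-mode gadgets have their $v_i$ matched to $a_i$ so they create no demand on the $v$-$v$ edges.

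Having pinned down the construction, the remaining steps are routine. \textbf{Step 1:} define $M$ gadget-by-gadget using the in/out modes together with a valid assignment of the $v_i$-to-$v_j$ edges for vertices outside $S$ (this is where I would invoke that $S$ is independent and possibly orient a fractional/perfect-matching-type argument on $V \setminus S$, or simply match each such $v_i$ to its first available neighbour in $V\setminus S$). \textbf{Step 2:} check stability by enumerating the finitely many candidate blocking pairs within each gadget and the single type of cross-gadget pair $\{v_i, v_j\}$, using the mode definitions to rule each out — the $a_i$-first-choice pairs are ruled out because $a_i\in$ in-mode gadgets is matched to $v_i$, its third choice, only when... (again the delicate case). \textbf{Step 3:} count first choices: each gadget contributes a fixed number of first choices, plus one extra (namely $v_i$ getting $a_i$) for each $u_i \in S$; sum to get something of the form $cn + k \geq cn + K$, and observe $K$ first choices are attainable since $k \geq K$. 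I would not grind through the arithmetic of the constant $c$ here; the structure of the earlier proof of Theorem~\ref{thm:1stmax_hard} suggests it will fall out mechanically once the modes are fixed.
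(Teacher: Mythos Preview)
Your proposal has a genuine gap: you have misidentified both the two modes and which agent supplies the first choice.

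The pair $\{v_i,a_i\}$ can never occur in a stable matching. Observe that $a_i$ ranks $w_i$ and $b_i$ above $v_i$, and $b_i$'s only acceptable partners are $w_i$ and $a_i$. A short Phase-1 style reduction (which the paper carries out explicitly) shows that $\{b_i,w_i\}$, $\{a_i,v_i\}$, $\{a_i,x_i\}$, $\{a_i,y_i\}$ and $\{a_i,w_i\}$ are all deleted, so in every stable matching we have $\{a_i,b_i\}\in M$ and the remaining four agents $v_i,w_i,x_i,y_i$ sit on a 4-cycle with exactly two perfect matchings. Your ``in-mode'' with $\{v_i,a_i\}$ and unmatched $y_i,b_i$ is therefore not an option at all, and your attempt to rescue the ``out-mode'' by matching $v_i$ to some $v_j$ across gadgets is also wrong (no $v$--$v$ edge is ever used; indeed such an edge would leave an odd number of agents inside a gadget).

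The correct construction is: always take $\{a_i,b_i\}$; if $u_i\notin S$ take $\{v_i,w_i\},\{x_i,y_i\}$; if $u_i\in S$ take $\{v_i,y_i\},\{w_i,x_i\}$. The single (original) first choice per in-mode gadget is $w_i$ getting $x_i$, not $v_i$ getting $a_i$; out-mode gadgets contribute zero first choices. Hence the total is exactly $|S|=k\ge K$. The independent-set hypothesis enters where you did not expect it: when $u_i\in S$, $v_i$ is matched to $y_i$, its worst option, so the only danger is a cross pair $\{v_i,v_j\}$ with $\{u_i,u_j\}\in E$; independence forces $u_j\notin S$, whence $v_j$ already has $w_j$, its best achievable partner, and does not block.
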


\begin{proof}
Note that $a_i$'s first choice is $w_i$, so there is no stable matching where $w_i$ will be matched to someone they like less than $a_i$. Therefore we can remove the pair $\{b_i, w_i\}$ from the preference lists. Now $b_i$'s first choice is $a_i$, so $a_i$ cannot be matched to anyone they like less than $b_i$. This removes the pairs $\{a_i, v_i\}$, $\{a_i, x_i\}$ and $\{a_i, y_i\}$. Now $v_i$'s first choice is $w_i$, meaning that there is no stable matching, where $w_i$ matched to someone they like less than $v_i$. This removes the pair $\{a_i, w_i\}$ from the preference lists.

We are left with the reduced table:\\
$v_i$ : $w_i$ [$v_j$ for $1 \leq j \leq n $ if $\{u_i, u_j\} \in E$] $y_i$ \\
$w_i$ : $x_i$ $v_i$\\
$x_i$ : $y_i$ $ w_i$\\
$y_i$: $v_i$ $x_i$\\
$a_i$ : $b_i$ \\
$b_i$ : $a_i$ \\

Any pairs that are not on this list cannot be stable.

If $u_i \notin S$, add the pairs $\{v_i, w_i\}$ and $\{x_i, y_i\}$ to $M$ - 0 first choices. If $u_i \in S$, add the pairs $\{v_i, y_i\}$ and $\{w_i, x_i\}$ to $M$- 1 first choice. In any case, add the pair $\{a_i, b_i\}$.

This matching is stable because if $u_i \notin S$
\begin{itemize}
\setlength\itemsep{0em}
    \item $v_i$ is matched to their (reduced) first  choice
    \item $w_i$'s first choice $x_i$ is matched to their first choice and $w_i$ is matched to their second choice
    \item $x_i$ is matched to their (reduced) first choice
    \item $y_i$'s (reduced) first choice $v_i$ is matched to their (reduced) first choice and $y_i$ is matched to their (reduced) second choice
\end{itemize}
and if $u_i \in S$
\begin{itemize}
\setlength\itemsep{0em}
    \item $w_i$ is matched to their first choice $x_i$
    \item $y_i$ is matched to their (reduced) first choice
    \item $x_i$'s (reduced) first choice $y_i$ is matched to their (reduced) first choice
    \item $v_i's$ (reduced) first choice has their first choice, and if any $v_j$ in $v_i$'s preference list is not matched to their first choice, $u_i, u_j \in S$ but $\{u_i, u_j\} \in E$, a contradiction to $S$ being an independent set.
\end{itemize}

We can see that for each vertex in $S$ we get one (non-reduced) first choice, and for each vertex not in $S$, we get 0 (non-reduced) first choices. Therefore $M$ has at least $K$ first choices if $S$ has $k \geq K$ vertices.
\end{proof}

\begin{claim}\label{claim:nonapprox2}
Let $M$ be a stable matching with at least $K$ first choices. Then $G$ has an independent vertex set $S$ of at least the size $K$.
\end{claim}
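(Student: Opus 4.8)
The plan is to reuse the preference-list reduction already carried out in the proof of Claim~\ref{claim:nonapprox1}: any stable matching $M$ of $I$ uses only the pairs appearing in the reduced table, namely $\{a_i,b_i\}$, $\{v_i,w_i\}$, $\{w_i,x_i\}$, $\{x_i,y_i\}$, $\{v_i,y_i\}$, and $\{v_i,v_j\}$ for $\{u_i,u_j\}\in E$. Moreover, taking $S=\emptyset$ in Claim~\ref{claim:nonapprox1} exhibits a stable matching of $I$ in which every agent is matched, so by Theorem~4.5.2 of \cite{gusfield_book_1989} every agent is matched in $M$ as well; in particular $\{a_i,b_i\}\in M$ for every $i$, since in the reduced table $b_i$ (equivalently $a_i$) has no other acceptable partner.

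Next I would analyse, for each $i$, how the four agents $v_i,w_i,x_i,y_i$ can be matched. In the reduced table $w_i$ is acceptable only to $x_i$ and $v_i$, $x_i$ only to $y_i$ and $w_i$, and $y_i$ only to $v_i$ and $x_i$. A short case split shows there are exactly two ways to match all four of them: the ``case~1'' pairing $\{v_i,w_i\},\{x_i,y_i\}$ and the ``case~2'' pairing $\{v_i,y_i\},\{w_i,x_i\}$. The only alternative would be $M(v_i)=v_j$ for some neighbour $u_j$, but then $M(w_i)=x_i$ is forced and $y_i$ is left with no acceptable partner, contradicting that every agent is matched; so this alternative never occurs.

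Now define $S=\{u_i : \{w_i,x_i\}\in M\}$, i.e.\ the vertices whose agent group is matched as in case~2. I claim $S$ is independent: if $u_i,u_j\in S$ and $\{u_i,u_j\}\in E$, then $v_i$ is matched to $y_i$ and $v_j$ to $y_j$, but in $v_i$'s (reduced) list $v_j$ precedes $y_i$ and in $v_j$'s list $v_i$ precedes $y_j$, so $\{v_i,v_j\}$ is a blocking pair, contradicting stability. Finally, I would count first choices with respect to the original preference lists: the pairs $\{a_i,b_i\}$ contribute none (each of $a_i,b_i$ has $w_i$ as first choice), a case~1 group contributes none (all four of its agents receive a second or third choice), and a case~2 group contributes exactly one, namely $w_i$ receiving its first choice $x_i$. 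Hence $M$ has exactly $|S|$ first choices, so $|S|\ge K$, and $S$ is the required independent set.

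The main obstacle is the middle step --- pinning down that each agent group admits only the two ``case'' pairings and that no agent can be matched across groups without leaving someone (here $y_i$) unmatched. Once that structural fact is in hand, the independence of $S$ and the first-choice count are routine.
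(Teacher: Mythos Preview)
Your proof is correct and follows essentially the same approach as the paper: you use the reduced preference table and Theorem~4.5.2 of \cite{gusfield_book_1989} to force every agent to be matched, rule out cross-group pairings $\{v_i,v_j\}$ by observing that $y_i$ would be left unmatched, define $S$ as the set of vertices matched in case~2, verify independence via the blocking pair $\{v_i,v_j\}$, and count first choices to get $|S|\ge K$. Your write-up is in fact slightly more explicit than the paper's in spelling out the case analysis and the first-choice count.
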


\begin{proof}
Since we can match every agent as in case 1, a stable matching of $M$ exists and matches every agent \cite[Theorem 4.5.2]{gusfield_book_1989}. By looking at the reduced preference lists from the previous claim, it is clear that the two cases are the only possible ways to match every agents, unless some $v_i, v_j$ are paired to each other.

We now show this is impossible. If some $v_i$ is matched to some $v_j$, either $w_i, x_i$ or $y_i$ must be unmatched. But we have shown that any stable matching matches every agent, a contradiction.

Let $S$ consist of the vertices whose agents are matched as in case 2. This is an independent set, because if there was an edge between two agents in it, say $u_i$ and $u_j$, then $v_i$ would be matched to $y_i$ and $v_j$ to $y_j$, meaning $v_i$ and $v_j$ would form a blocking pair. This would contradict the stability of $M$.

Next we need to show that $|S| \geq K$. Let $k = |S|$. Then there are $k$ first choices in $M$. But since we assumed that there are at least $K$ first choices in $M$, $k \geq K$.
\end{proof}

\begin{theorem}\label{thm:1stmax-nonapprox}

FC-SRI does not admit an approximation algorithm with constant performance guarantee unless P=NP.
\end{theorem}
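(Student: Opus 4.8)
The plan is to recognise that Claims~\ref{claim:nonapprox1} and~\ref{claim:nonapprox2} together constitute an optimum-preserving reduction from MAX-IS to FC-SRI, and then to invoke the constant-factor inapproximability of MAX-IS \cite{hstad1996clique}. So the whole theorem comes essentially for free once the two claims are packaged correctly.

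First I would consolidate the two claims into one statement. For the instance $I$ constructed from a graph $G$, let $\mathrm{opt}_{FC}(I)$ denote the maximum number of first choices over all stable matchings of $I$, and $\mathrm{opt}_{IS}(G)$ the size of a maximum independent set of $G$. Claim~\ref{claim:nonapprox1} produces, in polynomial time, from an independent set of size $k$ a stable matching of $I$ with $k$ first choices, so $\mathrm{opt}_{FC}(I) \ge \mathrm{opt}_{IS}(G)$. Claim~\ref{claim:nonapprox2} produces, in polynomial time, from a stable matching with $f$ first choices an independent set of size exactly $f$ (the extracted set $S$ consists precisely of the gadgets matched as in ``case 2'', each contributing exactly one first choice), so $\mathrm{opt}_{IS}(G) \ge \mathrm{opt}_{FC}(I)$; hence $\mathrm{opt}_{FC}(I) = \mathrm{opt}_{IS}(G)$. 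I would also record that $I$ is built from $G$ in polynomial time, that $I$ always admits a stable matching (match every gadget as in ``case 1''), and that the extraction map $M \mapsto S$ is explicit and runs in polynomial time.

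The theorem then follows by the usual contrapositive. Suppose, for contradiction, that for some constant $\alpha \ge 1$ there is a polynomial-time algorithm $\mathcal{A}$ that, on every instance of SRI, returns a stable matching whose number of first choices is at least a $1/\alpha$ fraction of the optimum. Given an arbitrary graph $G$ with at least one vertex, construct $I$, run $\mathcal{A}$ on $I$ to obtain a stable matching $M$ with at least $\mathrm{opt}_{FC}(I)/\alpha$ first choices, and extract the associated independent set $S$. Since $|S|$ equals the number of first choices of $M$ and $\mathrm{opt}_{FC}(I) = \mathrm{opt}_{IS}(G)$, we obtain $|S| \ge \mathrm{opt}_{IS}(G)/\alpha$; as $\mathrm{opt}_{IS}(G) \ge 1 > 0$, this is a polynomial-time $\alpha$-approximation for MAX-IS, contradicting the constant-factor inapproximability of MAX-IS \cite{hstad1996clique} unless $\mathrm{P} = \mathrm{NP}$.

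The point that needs care, rather than a genuine obstacle, is that the reduction must be used as an \emph{approximation} reduction: the optima on the two sides have to agree \emph{exactly} (no additive slack) and the witness extraction must lose nothing, both of which hold because the two claims give exact, constructive correspondences in both directions and because an $\alpha$-approximation algorithm for FC-SRI returns an actual stable matching from which $S$ can be recovered. I would also remark that, since optima are preserved exactly, the same reduction in fact rules out a PTAS for FC-SRI, although the statement only asserts the constant-factor case.
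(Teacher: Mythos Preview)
Your proposal is correct and follows essentially the same approach as the paper: both establish via Claims~\ref{claim:nonapprox1} and~\ref{claim:nonapprox2} that $\mathrm{opt}_{FC}(I)=\mathrm{opt}_{IS}(G)$, then argue by contradiction that a constant-factor approximation for FC-SRI would translate through the (polynomial-time, witness-preserving) reduction into a constant-factor approximation for MAX-IS, contradicting \cite{hstad1996clique}. Your write-up is in fact a bit more careful than the paper's, explicitly noting the polynomial-time extraction map and the positivity of $\mathrm{opt}_{IS}(G)$, but the underlying argument is the same.
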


\begin{proof}
Let us reduce a graph $G$ to an instance $I$ of SRI as above. 

Let $l_{opt}$ be the optimal number of first choices in $I$. Then the maximum independent set of $G$ has the size $l_{opt}$: By Claim~\ref{claim:nonapprox2}, $G$ has an independent set of at least the size $l_{opt}$. By Claim~\ref{claim:nonapprox1}, if the maximum independent set of $G$ is  $k > l_{opt}$, then $I$ has a matching with $k > l_{opt}$ first choices, contradicting the optimality of $l_{opt}$.

Assume FC-SRI admits an approximation algorithm with a constant performance guarantee $c$. Then the algorithm will give us a matching with $l_{opt} / c$ first choices. By Claim~\ref{claim:nonapprox2} we can translate this to an independent set of $G$ in polynomial time, with the independent set having at least the size $ l_{opt} / c$. But then we have a c-approximation algorithm for the general case of the maximum independent set, which does not exist unless P = NP \cite{hstad1996clique}.
\end{proof}

We can use the construction of this proof to also show that FC-SRI is W[1]-hard with respect to the number of first choices.

\begin{theorem}\label{thm:w1-hard}
FC-SRI is W[1]-hard with respect to the number of first choices.
\end{theorem}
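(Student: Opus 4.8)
The plan is to reuse the reduction from the proof of Theorem~\ref{thm:1stmax-nonapprox} essentially verbatim, but now to treat it as a parameterized reduction rather than an approximation-preserving one. Recall that MAX-IS, parameterized by the size $k$ of the independent set, is the canonical W[1]-hard problem. The key observation is that the reduction $G \mapsto I$ described above is parameter-preserving in the sense we need: Claim~\ref{claim:nonapprox1} shows that an independent set of size $k$ in $G$ yields a stable matching of $I$ with exactly $k$ first choices, and Claim~\ref{claim:nonapprox2} shows that a stable matching of $I$ with $k$ first choices yields an independent set of $G$ of size at least $k$. Hence $G$ has an independent set of size $\geq K$ if and only if $I$ has a stable matching with $\geq K$ first choices, and the parameter on the SRI side is \emph{exactly} the parameter $K$ on the independent set side.

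First I would state formally that the map is a parameterized reduction: it runs in polynomial time (the instance $I$ has $6n$ agents and preference lists of length $O(n)$, built in polynomial time), and it maps the parameter $K$ to the parameter $K$ (the target number of first choices), which is trivially a computable function of $K$. Then I would invoke the equivalence established by Claims~\ref{claim:nonapprox1} and~\ref{claim:nonapprox2}: the decision version ``does $I$ admit a stable matching with at least $K$ first choices?'' has the same answer as ``does $G$ admit an independent set of size at least $K$?''. Since the latter is W[1]-hard parameterized by $K$, so is the former, and therefore FC-SRI parameterized by the number of first choices is W[1]-hard.

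The only subtlety I would need to address — and it is minor — is that Claim~\ref{claim:nonapprox1} as stated gives ``at least $K$ first choices'' from an independent set of size $k \geq K$, while Claim~\ref{claim:nonapprox2} gives ``an independent set of size at least $K$'' from a matching with at least $K$ first choices; I should check these chain together cleanly into a biconditional at the exact threshold $K$, which they do because the construction in Claim~\ref{claim:nonapprox1} actually produces a matching with precisely $k$ first choices when fed an independent set of size exactly $k$. I expect no real obstacle here: the work is entirely in observing that the already-proven reduction is parameter-preserving, so the proof is a short paragraph. If one wanted to be thorough one could also remark that the same construction, combined with a standard argument, gives an $n^{O(K)}$-time exact algorithm for FC-SRI (guess the $K$ agents who receive first choices and check), matching the promised ``polynomial-time algorithm when the number of first choices is a constant'' and showing the W[1]-hardness is in some sense tight.
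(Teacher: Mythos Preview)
Your proposal is correct and follows essentially the same approach as the paper: both observe that the reduction built for Theorem~\ref{thm:1stmax-nonapprox} is in fact a parameterized reduction from MAX-IS (parameter $K$) to FC-DEC-SRI (parameter $K$), invoking Claims~\ref{claim:nonapprox1} and~\ref{claim:nonapprox2} for the two directions of the equivalence. Your write-up is actually more careful than the paper's, explicitly checking that the biconditional holds at the exact threshold $K$ and noting the XP upper bound, which the paper states as a separate theorem.
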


\begin{proof}
We know that MAX-IS-DEC is W[1]-hard with respect to the size of the independent set\cite{downey1995fixed}. Our polynomial-time reduction from MAX-IS to FC-SRI is a parameterised reduction:
We have already shown that a graph $G$ admits an independent set of size at least $K$ if and only if the reduced instance $I$ of FC-SRI has at least $K$ first choices. Therefore we have a polynomial-time reduction from $(G, K)$ to $(I, K)$ where $(G, K)$ is a yes-instance if and only if $(I, K)$ is a yes-instance. 
\end{proof}

\begin{theorem}
FC-SRI admits an $O(n^{k+2})$-time algorithm, where $n$ is the number of agents and $k$ the number of first choices in the sought stable matching. Therefore FC-SRI is in XP with respect to the number of first choices.
\end{theorem}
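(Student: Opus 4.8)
The plan is to turn the search into a bounded enumeration followed by a polynomial-time feasibility check. The starting observation is that if $M$ is a stable matching in which the $k$ agents of some set $S=\{a_{i_1},\dots,a_{i_k}\}$ all receive their first choice, then $M$ must contain the pair $\{a_{i_t},\first(a_{i_t})\}$ for every $a_{i_t}\in S$; so, once $S$ is fixed, the behaviour of $M$ on $S$ and on the partners of $S$ is completely determined. I would therefore enumerate all $\binom{n}{k}\le n^{k}$ candidate sets $S$ of $k$ agents, and for each one form the \emph{forced} partial matching $M_0=\{\{a,\first(a)\}:a\in S\}$. If $M_0$ is not a valid partial matching (some agent is forced to two different partners), discard $S$. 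Otherwise the problem reduces to: decide whether $I$ admits a stable matching containing $M_0$, and if so exhibit one.

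For that subproblem I would reduce to an ordinary SRI instance and invoke Irving's $O(n^{2})$ algorithm. Let $T$ be the set of at most $2k$ agents covered by $M_0$. First discard $S$ immediately if $M_0$ itself contains an internal blocking pair (two agents of $T$ that prefer each other to their $M_0$-partners). Otherwise build $I'$ from $I$ by deleting all agents of $T$, and then, for every agent $c\notin T$ and every agent $b\in T$ such that $b$ prefers $c$ to its $M_0$-partner, truncate $c$'s list by removing every agent whom $c$ likes no more than $b$ (and symmetrically removing $c$ from those agents' lists). Run Irving's algorithm on $I'$; if it returns a stable matching $M_1$ in which every truncated agent is matched, output $M_0\cup M_1$, and otherwise discard $S$. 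Each $S$ costs $O(n^{2})$, so the total running time is $O(n^{k+2})$; solving the optimisation version by trying $k=1,2,\dots$ (which stalls by $k=|A^{*}|$) then places FC-SRI in XP with respect to the number of first choices.

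The correctness of the reduction --- and the step I expect to be the main obstacle --- is proving that $I$ admits a stable matching extending $M_0$ \emph{if and only if} $I'$ admits a stable matching in which all truncated agents are matched. The delicate direction is soundness: deleting agents and truncating lists can only remove potential blocking pairs, never create them, so one must check that every blocking pair of $M_0\cup M_1$ in $I$ that has become invisible in $I'$ has genuinely been neutralised. Any such pair must involve some $b\in T$; if $b\in S$ then its $M_0$-partner is its own first choice, so $b$ cannot be the improving endpoint, and otherwise the truncation of the other endpoint $c$ forces $c$ to be matched in $M_1$ to something it strictly prefers to $b$ --- here one uses the invariance of the matched set of an SRI instance (\cite[Theorem 4.5.2]{gusfield_book_1989}) to argue that such a $c$ is necessarily matched, so that discarding $S$ when some truncated agent is left unmatched is correct. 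For completeness, one verifies that the restriction of any stable matching of $I$ extending $M_0$ survives the truncation and stays stable in $I'$. Finally, since each agent of $S$ is matched to its first choice in $M_0\subseteq M_0\cup M_1$, the produced matching has at least $k$ first choices, while no set $S$ of size larger than the optimum can pass the test, which yields the claimed exact guarantee.
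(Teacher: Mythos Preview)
Your high-level strategy coincides with the paper's: enumerate the $\binom{n}{k}\le n^{k}$ candidate sets $S$ of agents to receive their first choice, force the pairs $\{a,\first(a)\}$ for $a\in S$, and test in $O(n^{2})$ time whether a stable matching containing those pairs exists. The difference is purely in how the feasibility test is carried out. The paper does not build the reduced instance $I'$ at all; it simply invokes the known result of Fleiner, Irving and Manlove~\cite{fleiner_efficient_2007} that, given an SRI instance together with a set of forced pairs, one can decide in $O(m)$ time whether some stable matching contains all forced pairs (and output one if so). That single citation replaces your entire delete-and-truncate construction and the accompanying soundness/completeness argument.

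Your reduction is essentially the standard proof of the forced-pairs result, and it is correct in outline, but note one small gap in your soundness sketch: a blocking pair of $M_0\cup M_1$ that is ``invisible'' in $I'$ need not contain an agent of $T$; it may be a pair $\{c,d\}$ with $c,d\notin T$ that was removed by truncation. The fix is the same idea you already use --- if $d$ was deleted from $c$'s list because of some $b\in T$, then $c$ is a truncated agent, hence matched in $M_1$ to someone it prefers to $b$ and therefore to $d$ --- but it is a separate case. Overall your argument buys self-containment at the cost of length; the paper's buys brevity by outsourcing the forced-pairs subroutine.
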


\begin{proof}
Let us generate every subset of $n$ of size exactly $k$. There are $n^k$ such subsets. For each subset $U$, let us create a set of forced edges $\{ \{a_i, \first (a_i) \} | a_i \in U \}$, where $\first (a_i)$ is the agent $a_j$ such that $\rank (a_i, a_j) = 1$. We can solve in $O(m)$ time, where $m$ is the number of all the acceptable pairs, whether there is a stable matching with the forced edges present~\cite{fleiner_efficient_2007}. If such a matching exists, we can report this as a solution. If no such matching exists for any subset $U$, we do not have stable matching with at least $k$ first choices.

We have $n^k$ subset, which is in $O(n^k)$. For each of them, the forced-pair algorithm runs in $O(m) = O(n^2)$ time. Therefore the complexity of our algorithm is $O(n^{k+2})$.
\end{proof}

\subsection{2-Approximability of LC-SRI}

We now present a 2-approximation algorithm for LC-SRI. This, although not an approximation algorithm for generous SRI, approximates to most significant value in the reverse profile.

\begin{theorem}
\emph{LC-SRI} admits a 2-approximation algorithm.
\end{theorem}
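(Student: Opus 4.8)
The plan is to reduce LC-SRI to finding a minimum-regret stable matching, which is solvable in $O(n^2)$ time by Gusfield and Irving, and then to count $R^{th}$ choices in a cleverly chosen matching. Let $R$ be the regret of a minimum-regret stable matching $M_0$ of $I$; this is exactly the value denoted $R$ in the definition of LC-SRI, and it can be computed in polynomial time. Recall that in any stable matching with regret exactly $R$, the number of agents matched to a rank worse than $R$ is zero, so the only quantity we must approximate is $p_R$, the number of agents matched to their $R^{th}$ choice. The LC-SRI optimum is the minimum value of $p_R$ over all stable matchings whose regret is at most $R$ (equivalently, exactly $R$, unless $R=0$ in which case the answer is trivially $0$).

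First I would observe that an agent $a_i$ contributes to $p_R(M)$ only if $M(a_i)$ is $a_i$'s $R^{th}$-ranked acceptable partner. Restricting the instance to the pairs that appear in \emph{some} stable matching with regret $\le R$, we get a subinstance where every agent's list has length at most $R$, and an agent contributes to $p_R$ precisely when it is matched to the \emph{last} entry of its (restricted) list. The key structural fact I would exploit is that in the rotation poset / stable-pair structure, the pairs achieving rank $R$ for a given agent form a limited set, and each stable matching with regret $\le R$ corresponds to a closed subset of rotations. I would then argue that for each agent $a_i$ whose $R^{th}$ choice is a "dangerous" pair, either $a_i$ is forced to that pair in \emph{every} such stable matching, or there is a stable matching with regret $\le R$ avoiding it; and avoiding one such pair can force at most one \emph{other} agent onto its $R^{th}$ choice (via the partner in the blocking analysis). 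This pairing-up argument is what yields the factor $2$: greedily eliminate rank-$R$ assignments one at a time, each elimination removing one agent from $p_R$ while possibly adding at most one, so the final matching has $p_R$ at most twice the optimum — more precisely, the agents matched to rank $R$ in the output can be partitioned into pairs (or singletons) each of which must contain at least one agent that is matched to rank $R$ in \emph{every} regret-$\le R$ stable matching, hence in the LC-SRI optimum.

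Concretely, the algorithm I would present is: (1) compute $R$; (2) if $R = 0$ output $M_0$; (3) otherwise, form the set $P$ of pairs $\{a_i, a_j\}$ that lie in some stable matching of regret $\le R$ and for which $a_j = $ the $R^{th}$ choice of $a_i$ (so $a_i$ is "unhappy" in such a matching); (4) repeatedly pick an unhappy pair and test, using the forced/forbidden-pair machinery for minimum-regret stable matchings from \cite{fleiner_efficient_2007}, whether there is a regret-$\le R$ stable matching forbidding that pair — if so, move to it, update $P$, and recurse; if not, mark the pair as \emph{unavoidable}; (5) output the resulting stable matching $M$. Then I would prove the performance guarantee: every unhappy agent of $M$ either belongs to an unavoidable pair (and so is unhappy in every regret-$\le R$ stable matching, in particular in the optimum) or was "created" by forbidding some pair that made an unavoidable agent's partner shift; charging the latter to the former gives $p_R(M) \le 2\cdot\mathrm{OPT}$.

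The main obstacle will be step (4)/(5): proving that forbidding one dangerous pair increases $p_R$ by at most one and that the process terminates in polynomially many steps, and making the charging argument rigorous. This requires a careful analysis of how the minimum-regret stable-matching structure changes under adding a forbidden pair — specifically, that the "ripple effect" of rotations eliminated is bounded so that at most one new agent is pushed to rank $R$ — and that "unavoidable" pairs really are forced in the LC-SRI optimum. I would isolate this as a lemma about the stable-pair lattice restricted to regret $\le R$, possibly phrased via rotations: forbidding a pair eliminates a single rotation (or a chain determined by it), and each such elimination changes the rank of at most one further agent to $R$. If a clean rotation-based argument is elusive, a fallback is a direct exchange argument comparing $M$ with an optimal LC-SRI matching $M^*$ along the symmetric-difference components (which are paths and cycles of $M\triangle M^*$), showing each component contains at least as many $M$-unhappy agents that are also $M^*$-unhappy as not — again a factor-$2$ bound.
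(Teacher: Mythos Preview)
Your proposal takes a very different route from the paper and, as written, has a genuine gap at exactly the point you flag as ``the main obstacle.''

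The paper's proof is a two-line black-box reduction. After computing the minimum regret $R$ and truncating every preference list at rank $R$, it invokes the 2-approximation of Teo and Sethuraman \cite{teo1997lp} for \emph{optimal} SRI with an arbitrary ``U-shaped'' cost function. The cost function chosen is simply
\[
c(i,j)=\begin{cases}1 & \text{if $j$ is $i$'s $R^{th}$ choice,}\\ 0 & \text{otherwise,}\end{cases}
\]
which is U-shaped because on each truncated list the costs are $0,0,\dots,0,1$ (or all zeros if the list is shorter than $R$). Since $\sum_{\{i,j\}\in M}(c(i,j)+c(j,i))$ is precisely $p_R(M)$, Teo--Sethuraman's guarantee yields the factor $2$ immediately. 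No rotation analysis, no charging, no forbidden-pair iteration.

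Your approach, by contrast, rests on the lemma that forbidding a single rank-$R$ pair can push \emph{at most one} other agent onto its $R^{th}$ choice. You correctly identify this as the crux, but you give no argument for it, and I do not believe it is true in general: eliminating a rotation in an SRI instance changes the partners of every agent on that rotation simultaneously, and there is no structural reason why only one of them should land at rank $R$ afterward. The same difficulty infects the fallback symmetric-difference argument: along a component of $M\triangle M^*$ there is no a priori bound relating the number of $M$-unhappy agents to the number of $M^*$-unhappy ones, because ``being at rank $R$'' depends on the absolute position in a preference list, not on any alternating-path parity. Without that lemma your greedy procedure has no performance guarantee at all, and the proposal does not indicate how to prove it.

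In short: the paper sidesteps all of this by observing that $p_R$ is a U-shaped cost and citing \cite{teo1997lp}. If you want to rescue your direct approach you would need a proof of the one-in-one-out lemma, and I suspect that if such a proof exists it would itself go through the LP/median structure underlying Teo--Sethuraman, at which point you may as well cite them directly.
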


\begin{proof}
Let $I$ be an instance of SRI with an agent set $A$. We can solve minimum-regret SRI in polynomial time \cite{Knuth_lectures_1976}. Let us solve for this. Throughout this section, let $R$ denote the minimum regret of $I$. We remove from each agent of $I$ any choice whose rank is higher than $R$. We denote this reduced instance as $I'$. Any stable matching of $I'$ must be a minimum-regret stable matching of $I$.

\begin{definition}[\cite{gusfield_book_1989}]
Optimal SRI is an extension of general SRI, where we have some real-valued weight function $c(u,v)$. Given an instance $I$, we aim to find a stable matching $M$ that minimises $\sum_{u,v \in M}c(u,v)$. 
\end{definition}

By \cite{teo1997lp}, the optimal SRI can be 2-approximated, if the associated cost function $c$ satisfies the ``U-shape" condition. Namely,
\begin{quote}[\cite{teo1997lp}]
for each $i$ with preference list $i_i <_i \dots <_i i_n$, there is a node $i_k$ such that \\
$c(i, i_1) \geq \dots \geq c(i, i_{k-1}) \geq c(i, i_k)$ \\
and\\
$c(i, i_k) \leq c(i, i_{k+1} \dots \leq ) \leq c(i, i_n)$. 
\end{quote}
Let our cost function be 
\begin{equation*}
c(i,j) = \begin{cases}
    1 \quad \textit{j is i's $R^{th}$ choice} \\
    0 \quad \textit{otherwise}
\end{cases}
\end{equation*}

This satisfies the U-shaped condition. For any agent $i$, if they find fewer than $R$ agents acceptable, the sequence of costs is $c(i, i_1) \dots c(i_L)$ = $0, 0, \dots 0$, where $L$ is the number of agents they find acceptable. This is trivially U-shaped, as a constant sequence is both decreasing and increasing.

On the other hand, if agent $i$ finds $R$ agents acceptable in $I'$, the sequence of costs is $c(i, i_1) \dots c(i_R)$ = $0, 0, \dots 0, 1$. If we take $k=0$, the proportion before k is empty, hence trivially decreasing, and the proportion after k is increasing.

Let $M$ be a matching. Because $c(i,j) = 1$ if and only if $j$ is $i$'s $R^{th}$ choice, the sum $\sum_{\{i, j\} \in M}c(i,j) + c(j,i)$ is the number of people in $M$ who get their $R^{th}$ choice. As we have a 2-approximation algorithm for this sum over all the possible matchings, we have a 2-approximation algorithm for LC-SRI.
\end{proof}

\section{IP and CP models}\label{sec:cp_ip}

\subsection{Integer Programming formulations}

\subsubsection{Description of the models}\label{sec:ip_model}
In this section we present the integer programming formulations for FC-SRI and egalitarian, generous, rank-maximal and almost-stable SRI.

We wrote Rothblum's \cite{rothblum_characterization_1992} formulation for SMI in terms of an undirected graph $(V,E)$, like in \cite{abeledo_stable_1994}. 

Let $\{v, u\} \in E$ if $u$ and $v$ find each other acceptable. If $u, v \in V$ are matched, $x_{u,v} =1$, otherwise $x_{u,v} =0$. $N(v)$ denotes the set of neighbours of the vertex $v$, i.e. all the agents $v$ finds acceptable and who find $v$ acceptable.
\begin{align}
&\sum_{u \in N(v)}x_{u, v} \leq 1 \textit{ for each } v \in V  \\
&x_{u, v} = 0 \textit{ for each } \{u, v\} \notin E \\
&\sum_{i \in \{ N(u): i >_u v\}}x_{u, i} + \sum_{j \in \{ N(v): i >_v u\}}x_{v, j} + x_{u, v}\geq 1  \nonumber \\
&\qquad\qquad\textit{ for each } \{u, v\} \in E \\
&x_{u, v} \in \{0, 1\} \textit{ for each } \{u, v\} \in E \\
&x_{u,v} = x_{v,u} \textit{ for each } u,v \in V
\end{align}

The first condition enforces that no agent is matched to more than one other agent. The second condition enforces that if two agents find each other unacceptable, they are not matched. The third condition states that if $u$ and $v$ find each other acceptable, either $u$ is matched to someone they prefer to $v$, $v$ is matched to someone they prefer to $u$, or $u$ and $v$ are matched. This means there are no blocking pairs. The fourth statement restricts that agents can only be either matched or unmatched. The last condition enforces that if $u$ is matched to $v$, $v$ is also matched to $u$.

To get the egalitarian stable matching, we want to minimise 
$\sum_{u, v \in V} rank(u, v)x_{u,v}$.

For the profile-based optimality criteria, let us first define the function 
\begin{equation*}
\delta^r(u, v) = \begin{cases}
    1 \quad rank(u,v) = r \\
    0 \quad \textit{otherwise}
\end{cases}
\end{equation*}

For FC-SRI, we want to maximise $\sum_{u, v \in V} \delta^1(u,v)x_{u,v}$. This will maximise the number of first choices.

For rank-maximal, we have to solve multiple optimisation problems.

In the first iteration, we solve for the FC-SRI and then set $y^1$ to be the number of first choices in a first-choice-maximal matching. In each subsequent iteration $i$, we have constraints $\sum_{u, v} x_{u,v}\delta^j(u,v) \geq y^j$ for all $\{u, v\} \in E$, $1 \leq j \leq i - 1$, and we maximise the number of $i^{th}$ choices: $\sum_{u, v \in V} x_{u,v}\delta^{i}(u,v)$. We define $y^i$ as the solution to that maximisation problem and use it in the following iterations. The last iteration we run is $i=L-1$, where $L$ is the length of the longest preference list an agent has.

For the generous stable matching, let us define $\Tilde{y}^L$ as minimising $\sum_{u, v \in V} x_{u,v}\delta^{L}(u,v)$, subject to the original constraints. In each subsequent iteration $i$, we have constraints $\sum_{u, v} x_{u,v}\delta^j(u,v) \leq \Tilde{y}^j$ for all $\{u, v\} \in E$, $i + 1 \leq j \leq L$. We minimise the number of $i^{th}$ choices: $\sum_{u, v \in V} x_{u,v}\delta^{i}(u,v)$. The last iteration we run is $i=2$. 

To model the almost stable matching, we want to be able to count the number of blocking pairs. We add a variable $b_{u,v}$ to the third condition, and constrain $b_{u,v}$ to the interval $\{0,1\}$:
\begin{align*}
&\sum_{i \in \{ N(u): i >_u v\}}x_{u, i} + \sum_{j \in \{ N(v): i >_v u\}}x_{v, j} + x_{u, v} + b_{u, v} \geq 1 \\
&\qquad\qquad\textit{ for each } \{u, v\} \in E\\
&b_{u, v} \in \{0, 1\} \textit{ for each } \{u, v\} \in E
\end{align*}
Now our condition states that either $u$ is matched to someone they prefer to $v$, $v$ is matched to someone they prefer to $u$, $u$ and $v$ are matched or $b_{u,v}=1$. Now if $u$ and $v$ form a blocking pair, we must have $b_{u,v}=1$. It follows that minimising  $\sum_{\{u,v\}\in E}b_{u,v}$ minimises the number of blocking pairs.

\subsubsection{Integer Programming implementation}\label{sec:ip_imp}
The formulation was implemented using Gurobi's \cite{gurobi} Python API. We chose Gurobi, because it is among the most powerful solvers \cite{anand_comparative_2017}, is free for academic use and has many accessible APIs. We chose the Python API because of its simplicity and ease of use. The source code is available at \cite{github_repo}.

The following lines create the basic SRI model:
\begin{lstlisting}[language=Python,backgroundcolor = \color{white}, numbers=right, breaklines]
    x = m.addVars(n, n, vtype=GRB.BINARY)
    m.addConstrs(x.sum([u for u in h.get_neighbours(v)], v) <= 1 for v in range(n))
    m.addConstrs(x.sum(u, v) == 0 for u,v in h.get_non_edges())
    m.addConstrs(
      x.sum(u, [i for i in h.get_preferred_neighbours(u,v)])
      + x.sum([i for i in h.get_preferred_neighbours(v,u)], v) 
      + x[u, v] >= 1
                for u,v in h.get_edges())
    m.addConstrs(x[u,v] == x[v,u] for u in range(n) for v in range(n))
\end{lstlisting}
The first line creates the variables $x_{u,v}$ and constraint 4. The second line adds constraint 1. The third line adds constraint 2. Lines 4-8 add constraint 5.

We created a class \lstinline{PreferenceHelper} that reads in a file and that supplies us with helpful functions related to the structure of the preferences. In the code above, \lstinline{h} is the\lstinline{PreferenceHelper} object. \lstinline{h.get_edges()} returns the set of acceptable pairs $E$ and \lstinline{h.get_non_edges()} the set of non-acceptable pairs. \lstinline{h.get_neighbours(v)} gives the neighbours of v, i.e. $N(v)$. \lstinline{h.get_preferred_neighbours(u,v)} gets the subset of $N(u)$ that $u$ prefers to $v$.

The optimality criteria are set using \lstinline{m.setObjective} function. \lstinline{PreferenceHelper} has a helper function \lstinline{h.delta(i)}, which builds a dictionary from $(u,v)$ to $\delta^i(u,v)$ and the attribute \lstinline{ranks}, which is a dictionary from $(u,v)$ to $rank(u,v)$. These are used to pass the functions $rank$ and $\delta^i$ to Gurobi.

A detailed description of the implementation of the different optimality criteria is in Appendix~\ref{app:ip_opt}. They all closely follow the descriptions outlined in Section~\ref{sec:ip_model}. It is important to note that the IP model for generous and rank-maximal SRI reuses the same model every iteration, but it does not take into account any information we might have received from the previous iteration. 

We used the following Gurobi-parameters:
\begin{itemize}
\setlength\itemsep{0em}
\item Egalitarian: BranchDir = -1, Heuristics = 0, PrePasses = 2
\item FC-SRI: Heuristics = 0, MIPFocus = 3, NoRelHeurWork = 60
\item Rank-maximal: ScaleFlag  = 1, Heuristics = 0, BranchDir = 1, PrePasses = 5, MIPFocus = 3
\item Generous: DegenMoves = 4, Heuristics = 0, PrePasses = 5, BranchDir = -1, MIPFocus = 2
\item Almost-stable: NormAdjust = 0,  Heuristics = 0.001, VarBranch = 1, GomoryPasses = 15, PreSparsify = 0
\end{itemize}

These were found using Gurobi's parameter-tuning tool.

\subsection{Constraint Programming formulations}

\subsubsection{Description of the models}

Our CP formulation is based on Prosser's simple formulation from \cite{prosser_constraint_2014}. The optimality criteria are added by us.

The main decision variable of the model is $agent[i], 0 \leq i < n$, where $n$ is the number of agents. The value of $agent[i]$ represent the rank of the agent $i$ is matched to. 

It has the constraints

\[ agent[i] > rank(i,j) \implies agent[j] < rank(j,i)\]
\[ agent[i] = rank(i, k) \implies agent[j] = rank(j,i)\]
where $0 \leq i < n$, $0 \leq j < n$.

The first one implies there are no blocking pairs: we cannot have a situation where both $i$ and $j$ are matched to someone they like less than each other. The second one implies that if $i$ is matched to $j$, then $j$ is also matched to $i$.

Additionally, when reading the instances, each agents' last choice is themselves. If an agent is matched to themselves, we treat the agent as unmatched.
\subsubsection{Egalitarian}
For the egalitarian SRI we also have the variable $cost$, which has the constraint
\[ \sum_{i = 0}^{n-1} agent[i] = cost \]
Our objective is to minimise the cost.

\subsubsection{Profile-based optimality}
For generous, rank-maximal and first-choice-maximal we have the variable array $profile$ with the constraint
\[profile[i] = | \{agent[j] = i | 0 \leq j < n\} |, 0 \leq i < n\]

For first-choice-maximal SRI we maximise $profile[0]$.

For rank-maximal, we solve iteratively. We have the constraint
\[ profile[i] \geq minprofile[i], \forall i \in 0 \dots n-1\]
where $minprofile[i]$ is initially 0 for every $i$. During the $i^{th}$ iteration we maximise $profile[i]$ and then update $minprofile[i]$ to be the optimal value obtained for $profile[i]$. This approach is similar to the one used for the IP formulation.

We use a similar approach for generous SRI. We have the constraint
\[ profile[i] \leq maxprofile[i], \forall i \in 0 \dots n-1\]
where maxprofile is initially $n$ for every $i$. In the $i^{th}$ iteration we minimise $profile[n - 1 - i]$ and then update $maxprofile[n - 1 - i]$ to be the optimal value obtained for $profile[n - 1- i]$.

\subsubsection{Almost Stable}
For almost stable SRI we add a $n \times n$ binary array $blocking$. The constraint that enforces there are no blocking pairs is replaced with the constraint
\begin{eqnarray*}
agent[i] > rank(i,j) \implies  \\
(agent[j] < rank(j,i) \vee blocking[i,j] = 1) 
\end{eqnarray*}
This enforces that either the agents $i$ and $j$ are not a blocking pair, or the variable $blocking[i,j]$ is 1. 
We also have variable $blocking\_sum$ which is the sum of all $blocking[i,j]$. Therefore, this variable counts the blocking pairs. More formally,
\[blocking\_sum = \sum_{i = 0}^{n-1} \sum_{j = 0}^{n-1} blocking[i,j]\]
We minimise for the $blocking\_sum$.

\subsubsection{Implementation}\label{sec:cp_imp}
We translated Prosser's implementation from Choco \cite{prud2016choco} 2.1.5 to Choco 4.0.8 and added the option of computing the rank-maximal, generous, first-choice-maximal, almost-stable and egalitarian SRI as described above. The code is available at \cite{github_repo}. We updated the code because the documentation for Choco 2.1.5 was no longer easily available.

The basic SRI model is set up with the same logic as in Prosser \cite{prosser_constraint_2014}. The model is created with \lstinline{model = new Model();} and the agents with \lstinline{agent = model.intVarArray("agents", n, 0, n - 1);}. The argument 0 sets the minimum value, the argument $n-1$ the maximum value.

As an example of the constraints, the constraint $agent[i] > rank(i,j) \implies agent[j] < rank(j,i)$ is implemented with 
\begin{lstlisting}[backgroundcolor = \color{white},  breaklines]
model.ifThen(model.arithm(agent[i], ">", rank[i][k]), model.arithm(agent[k], "<", rank[k][i]));
\end{lstlisting}
\lstinline{model.arithm(agent[i], ">", rank[i][k]} creates an arithmetic constraint, that states that the rank of agent $i$'s partner is higher than $rank(i, k)$. \lstinline{model.ifThen(constraintA, constraintB)} creates an implication \lstinline{A} $\implies$ \lstinline{B} and enforces it in the model.

We use egalitarian to give an example of the optimality criteria. We create an IntVar \\ \lstinline{cost = model.intVar("cost", 0, n * (n - 1));} and set our objective to minimising that with \\ \lstinline{model.setObjective(Model.MINIMIZE, cost);}. It is constrained to be the sum of the ranks of the agents: \lstinline{model.sum(agent, "=", cost).post();}

For generous and rank-maximal SRI we recreate the model for every iteration and add the new constraint. A more sophisticated implementation could save and reuse the model.

\section{CP and IP model evaluation}\label{sec:cp_ip_eval}
\begin{table}[ht]
\centering
\begin{tabular}{|l|l|l|l|l|l|}
\hline
completeness (\%) & size & IP & CP & ASP & AF\\
\hline
25 & 20 & 0.035 & 0.026 &  & \\
 & 40 & 0.142 & 0.012 & 0.027 & \\
 & 60 & 0.354 & 0.020 & 0.073 & \\
 & 80 & 0.697 & 0.037 & 0.173 & 0.121\\
 & 100 & 1.201 & 0.066 & 0.392 & 0.246\\
 & 150 & 3.404 & 0.171 & 1.994 & 1.047\\
 & 200 & 7.393 & 0.348 & 8.559 & 2.639\\
\hline
50 & 20 & 0.045 & 0.013 &  & \\
 & 40 & 0.235 & 0.019 & 0.071 & \\
 & 60 & 0.670 & 0.048 & 0.297 & \\
 & 80 & 1.457 & 0.089 & 0.941 & 0.523\\
 & 100 & 2.713 & 0.161 & 3.046 & 1.048\\
 & 150 & 8.991 & 0.470 & 15.793 & 4.996\\
 & 200 & 21.562 & 0.930 & 70.987 & 13.812\\
\hline
75 & 20 & 0.064 & 0.012 &  & \\
 & 40 & 0.386 & 0.032 & 0.157 & \\
 & 60 & 1.178 & 0.081 & 0.833 & \\
 & 80 & 2.770 & 0.171 & 3.596 & 1.250\\
 & 100 & 5.433 & 0.277 & 10.269 & 2.787\\
 & 150 & 18.987 & 0.872 & 51.169 & 12.415\\
 & 200 & 47.124 & 2.388 & 186.470 & 36.098\\
\hline
100 & 20 & 0.088 & 0.015 &  & \\
 & 40 & 0.584 & 0.046 & 0.245 & \\
 & 60 & 1.909 & 0.118 & 1.671 & \\
 & 80 & 4.637 & 0.236 & 7.456 & 2.524\\
 & 100 & 9.265 & 0.442 & 20.349 & 5.578\\
 & 150 & 33.618 & 1.279 & 112.671 & 24.855\\
 & 200 & 86.133 & 3.300 & 374.668 & 76.602\\
\hline
\end{tabular}
\caption{The performance of IP, CP, the formulation by Erdam et. al.\cite{erdem_answer_2020} (ASP) and the formulation by Amendola \cite{amendola_solving_2018} (AF) for the general SRI. The blank entries indicate values that were not present in the source paper.}\label{fig:performance_general}
\end{table}

\begin{table}[ht]
\centering
\begin{tabular}{|l|l|l|l|l|}
\hline
completeness (\%) & size & IP & CP & ASP\\
\hline
25 & 20 & 0.054 & 0.029 & \\
 & 40 & 0.170 & 0.013 & 0.204\\
 & 60 & 0.443 & 0.028 & 0.129\\
 & 80 & 0.849 & 0.046 & 0.255\\
 & 100 & 1.443 & 0.077 & 0.575\\
 & 150 & 3.357 & 0.320 & 14.126\\
 & 200 & 7.254 & 0.433 & 59.18\\
\hline
50 & 20 & 0.060 & 0.013 & \\
 & 40 & 0.312 & 0.026 & 0.106\\
 & 60 & 0.805 & 0.058 & 0.535\\
 & 80 & 1.832 & 0.112 & 1.82\\
 & 100 & 3.339 & 0.197 & 4.97\\
 & 150 & 8.839 & 0.600 & 153.54\\
 & 200 & 20.866 & 1.160 & 524.58\\
\hline
75 & 20 & 0.068 & 0.015 & \\
 & 40 & 0.434 & 0.040 & 0.722\\
 & 60 & 1.518 & 0.104 & 8.171\\
 & 80 & 3.573 & 0.193 & 6.99\\
 & 100 & 6.767 & 0.336 & 19.04\\
 & 150 & 18.291 & 0.934 & 492.7\\
 & 200 & 44.435 & 2.958 & 1757.0\\
\hline
100 & 20 & 0.107 & 0.015 & \\
 & 40 & 0.727 & 0.062 & 0.463\\
 & 60 & 2.450 & 0.158 & 2.534\\
 & 80 & 5.895 & 0.325 & 14.39\\
 & 100 & 11.543 & 0.500 & 35.92\\
 & 150 & 31.941 & 1.604 & 360.7\\
 & 200 & 79.483 & 4.298 & 844.03\\
\hline
\end{tabular}
\caption{The performance of IP, CP and ASP \cite{erdem_answer_2020} models in seconds for the egalitarian SRI. The blank entries indicate values that were not present in the source paper.}\label{fig:egalitarian_sri_performance}
\end{table}

\begin{table}[ht]
\centering
\begin{tabular}{|l|l|l|l|}
\hline
completeness (\%) & size & IP & CP\\
\hline
25 & 20 & 0.035 & 0.032\\
 & 40 & 0.141 & 0.016\\
 & 60 & 0.354 & 0.027\\
 & 80 & 0.695 & 0.055\\
 & 100 & 1.199 & 0.090\\
 & 150 & 3.408 & 0.219\\
 & 200 & 7.422 & 0.459\\
\hline
50 & 20 & 0.045 & 0.017\\
 & 40 & 0.233 & 0.030\\
 & 60 & 0.670 & 0.059\\
 & 80 & 1.444 & 0.118\\
 & 100 & 2.701 & 0.203\\
 & 150 & 9.000 & 0.568\\
 & 200 & 21.242 & 1.167\\
\hline
75 & 20 & 0.064 & 0.016\\
 & 40 & 0.385 & 0.048\\
 & 60 & 1.180 & 0.109\\
 & 80 & 2.773 & 0.221\\
 & 100 & 5.440 & 0.348\\
 & 150 & 18.627 & 1.049\\
 & 200 & 45.351 & 2.947\\
\hline
100 & 20 & 0.089 & 0.013\\
 & 40 & 0.585 & 0.060\\
 & 60 & 1.928 & 0.151\\
 & 80 & 4.661 & 0.308\\
 & 100 & 9.230 & 0.563\\
 & 150 & 32.493 & 2.001\\
 & 200 & 81.167 & 4.068\\
\hline
\end{tabular}
\caption{The performance of IP and CP models in seconds for the first-choice-maximal SRI}\label{fig:1stmax-performance}
\end{table}

\begin{table}[ht]
\centering
\begin{tabular}{|l|l|l|l|l|}
\hline
completeness (\%) & size & IP & CP & ASP\\
\hline
25 & 20 & 0.061 & 0.108 & \\
 & 40 & 0.240 & 0.167 & 0.201\\
 & 60 & 0.612 & 0.707 & 0.219\\
 & 80 & 1.325 & 2.485 & 0.256\\
 & 100 & 2.835 & 5.717 & 0.602\\
 & 150 & 9.219 & 14.835 & 17.703\\
 & 200 & 37.436 & 53.609 & 83.85\\
\hline
50 & 20 & 0.086 & 0.079 & \\
 & 40 & 0.488 & 0.469 & 0.108\\
 & 60 & 2.555 & 2.701 & 0.56\\
 & 80 & 5.231 & 6.065 & 1.85\\
 & 100 & 13.176 & 12.300 & 5.26\\
 & 150 & 103.559 & 72.184 & 149.447\\
 & 200 & 251.349 & 167.629 & 704.3\\
\hline
75 & 20 & 0.136 & 0.100 & \\
 & 40 & 1.268 & 0.925 & 0.33\\
 & 60 & 6.278 & 3.817 & 1.85\\
 & 80 & 12.213 & 6.502 & 7.16\\
 & 100 & 52.466 & 22.996 & 20.65\\
 & 150 & 228.700 & 91.175 & 529.16\\
 & 200 & 1135.815 & 362.521 & 1475.0\\
\hline
100 & 20 & 0.194 & 0.113 & \\
 & 40 & 2.383 & 1.542 & 0.56\\
 & 60 & 14.846 & 6.237 & 3.372\\
 & 80 & 42.392 & 15.437 & 15.47\\
 & 100 & 90.961 & 27.413 & 40.42\\
 & 150 & 651.276 & 190.827 & 362.12\\
 & 200 & TO & 530.909 & 1147.0\\
\hline
\end{tabular}
\caption{The performance of IP, CP and ASP \cite{erdem_answer_2020} models in seconds for the rank-maximal SRI. The timeout is 3000 seconds. If a single instance times out, we treat the average of instances as timed out. The blank entries indicate values that were not present in the source paper.}\label{fig:rankmax_performance}
\end{table}

\begin{table}[ht]
\centering
\begin{tabular}{|l|l|l|l|}
\hline
completeness (\%) & size & IP & CP\\
\hline
25 & 20 & 0.044 & 0.156\\
 & 40 & 0.162 & 0.278\\
 & 60 & 0.432 & 0.963\\
 & 80 & 0.918 & 3.378\\
 & 100 & 1.689 & 7.684\\
 & 150 & 4.918 & 20.410\\
 & 200 & 13.736 & 50.218\\
\hline
50 & 20 & 0.058 & 0.116\\
 & 40 & 0.289 & 0.636\\
 & 60 & 0.950 & 3.474\\
 & 80 & 2.118 & 7.435\\
 & 100 & 4.284 & 15.273\\
 & 150 & 20.625 & 76.575\\
 & 200 & 43.300 & 146.388\\
\hline
75 & 20 & 0.087 & 0.155\\
 & 40 & 0.521 & 1.196\\
 & 60 & 1.617 & 4.535\\
 & 80 & 3.890 & 7.775\\
 & 100 & 10.273 & 28.178\\
 & 150 & 36.575 & 93.860\\
 & 200 & 107.884 & 342.794\\
\hline
100 & 20 & 0.112 & 0.146\\
 & 40 & 0.794 & 1.817\\
 & 60 & 2.814 & 6.854\\
 & 80 & 8.163 & 17.646\\
 & 100 & 16.451 & 31.880\\
 & 150 & 69.062 & 171.903\\
 & 200 & 215.839 & 520.349\\
\hline
\end{tabular}
\caption{The performance of IP and CP models in seconds for the generous SRI}\label{fig:generous_performance}
\end{table}

\begin{table}[ht]
\centering
\begin{tabular}{|l|l|l|l|l|}
\hline
completeness (\%) & size & IP & CP & ASP\\
\hline
25 & 20 & 0.061 & 0.074 & \\
 & 40 & 0.194 & TO & 0.017\\
 & 60 & 0.484 & TO & 0.479\\
 & 80 & 0.989 & TO & 4.416\\
 & 100 & 1.816 & TO & 85.86\\
 & 150 & 4.623 & TO & TO\\
 & 200 & 10.922 & TO & TO\\
\hline
50 & 20 & 0.064 & 0.971 & \\
 & 40 & 0.354 & TO & 0.136\\
 & 60 & 0.993 & TO & 3.748\\
 & 80 & 2.282 & TO & 3.748\\
 & 100 & 4.358 & TO & 343.24\\
 & 150 & 12.572 & TO & TO\\
 & 200 & 44.174 & TO & TO\\
\hline
75 & 20 & 0.078 & 3.462 & \\
 & 40 & 0.618 & TO & 0.486\\
 & 60 & 1.909 & TO & 15.526\\
 & 80 & 4.491 & TO & 144.80\\
 & 100 & 8.978 & TO & 885.39\\
 & 150 & 27.018 & TO & TO\\
 & 200 & 83.185 & TO & TO\\
\hline
100 & 20 & 0.122 & 9.115 & \\
 & 40 & 0.914 & TO & 0.675\\
 & 60 & 3.078 & TO & 27.935\\
 & 80 & 7.495 & TO & 181.09\\
 & 100 & 15.998 & TO & 2627.77\\
 & 150 & 51.163 & TO & TO\\
 & 200 & 146.761 & TO & TO\\
\hline
\end{tabular}
\caption{The performance of IP, CP and ASP \cite{erdem_answer_2020} models in seconds for the almost-stable SRI. The timeout is 3000 seconds. If any of the instances times out, the average is TO. The blank entries were not present in the source paper.}\label{fig:almost_performance}
\end{table}

We evaluated our models against the random instances generated for \cite{erdem_answer_2020}. We had instances of size 20, 40, 60, 80, 100, 150 and 200. For each size, we had instances with completeness probability 25\%, 50\%, 75\% and 100\%. This is the probability that two agents find each other acceptable. For each size-instance combination we had 20 instances. Our timeout was 3000 seconds, same as in \cite{erdem_answer_2020}.

\subsection{Correctness}
We compared the outputs of CP and IP models against each other, and where applicable, against the outputs of the ASP model from \cite{erdem_answer_2020}. In most cases we either got identical matchings or matchings which received the same measure relative to the optimality criteria in question.

ASP claims the instance size:20, density:25, number:11 does not admit a stable matching but CP and IP models find answers. We suspect this is some kind of error with running the ASP.

Additionally, for 21 instances (all of size 40, density 25 and the instance 1 of size 60, density 25) IP answer had fewer blocking pairs than the ASP formulation. Our stability checker found  more blocking pairs from the ASP matching than the output claimed there were, generally tens of them. We do not know whether this was an error in the ASP model, in running it or whether the output file we have is referring to different instances. In all the other cases, ASP and IP models reported the same number of blocking pairs.

\subsection{Performance}
We evaluated the performance of the CP and IP models against each other and the ASP models from Amendola \cite{amendola_solving_2018} and Erdem et. al. \cite{erdem_answer_2020} (note that the performance evaluation of the former is in the latter). The Amendola's ASP formulation only computes the general SRI, whereas Erdem's one also solves for egalitarian, rank-maximal and almost-stable SRI. The ASP formulations have been run on a different machine, so the values are not directly comparable. Their magnitude and speed of increase can still give us an idea on how well the models compare.

We ran the models on a machine with Intel Core i3 CPU 530 @ 2.93GHz processor and 8.09 GB of memory. The runtimes are presented on Tables~\ref{fig:performance_general}-\ref{fig:almost_performance}. We see that CP outperforms IP on most optimality criteria, except almost-stable and to a lesser extent generous. Both solvers mostly outperform the previous ASP formulations. We describe the results in more detail in the following sections.

\subsubsection{General SRI}
The runtimes are as in Table~\ref{fig:performance_general}. CP remains the fastest solver for the general SRI by a wide margin. When the other solvers take tens of seconds for the largest instances, CP solves those within a few seconds. IP outperformed Erdem's ASP formulation and achieved runtimes similar to Amendola's ASP model.

\subsubsection{Egalitarian and First-Choice-Maximal SRI}
The average total times for building and solving the models are as in Table~\ref{fig:egalitarian_sri_performance} and Table~\ref{fig:1stmax-performance}.
Both CP and IP models were approximately as fast with this problem as they were with the general SRI. Consequently, CP outperformed IP by a wide margin. IP was about 10 times faster than Erdem's ASP formulation in the most difficult instances, CP was about 200 times faster.

\subsubsection{Generous and Rank-Maximal}
The runtimes are as in Table~\ref{fig:rankmax_performance} and Table~\ref{fig:generous_performance}.
Generous and rank-maximal SRI were substantially more computationally expensive than first-choice-maximal and egalitarian SRI.

CP still generally outperforms IP on the rank-maximal SRI, but the factors are much less drastic - CP is about 5 times faster than ASP. Additionally, on the instances with completeness 25\%, IP is actually outperforming CP. The performance of IP and ASP is relatively even, with IP outperforming ASP on big instances with completeness 25 - 75\% and ASP outperforming IP when the completeness is 100\%.

Curiously, IP was noticeable faster at solving generous SRI than CP and in fact outperformed CP there. It is likely that a more sophisticated CP implementation would perform better (see Section~\ref{sec:cp_imp}) although there might also be space to improve the IP implementation (see Section~\ref{sec:ip_imp}).

\subsubsection{Almost-stable}
Almost-stable was the only optimality criteria where CP had serious difficulties. The formulation started timing out already in instances of size 40. We speculate this is because CP is better suited for problems with few feasible solutions. For the almost-stable SRI, every matching is feasible, not just the stable ones. This massively increases the search space and decreases the inferences a CP solver can make.

The IP model, on the other hand, performed well. It managed to solve for problem instances where even the ASP model timed out. Interestingly, IP actually solved almost-stable SRI faster than rank-maximal and generous SRI. Since rank-maximal and generous should be theoretically easier - there are far fewer stable matchings than all possible matchings - these formulations could probably be improved.

As an interesting side note, most of the almost-stable matchings did not contain many blocking pairs. Most instances admitted a matching with a single blocking pair, a few required two and one as many as four.

\section{Conclusions and future work}\label{sec:future}

We have proven that the rank-maximal and generous SRI and FC-SRI remain NP-hard even when the preference lists are of length at most 3. Although not surprising, these results mean that we cannot use short-preference lists to find polynomial-time solvable special cases of the problems.

We showed that the number of $k^{th}$ choices, where $k$ is the minimum-regret, can be 2-approximated. This on its own is not a 2-approximation algorithm for the generous SRI, but it gives us hope we might find one. On the other hand, we showed that the number of first choices does not admit an approximation algorithm with a constant performance guarantee. Therefore it is very unlikely that we can find a meaningful approximation algorithm for the rank-maximal SRI.

We presented and evaluated IP formulations for SRI and its different optimality criteria. Additionally, we added the different optimality criteria to Prosser's CP formulation \cite{prosser_constraint_2014}.

A few interesting questions remain open. Does generous SRI admit an approximation algorithm? It is not even clear how to define measure on profiles. One approach would be the restriction $\sum_{i=0}^{n-1} c p^r(M_{opt})[i] \geq \sum_{i=0}^{n-1} p^r(M)[i]$, $0 \leq i < n$ where $c$ is the performance guarantee.

It is also unknown whether FC-SRI remains c-inapproximable if the preference lists are of a bounded length. The reduction we presented does not bound the length of agents' preference lists.

Despite the number of first-choices being inapproximable, there might be a related measure that could be approximated. The reduction had many first choices which could never be attained. Could we maximise the number of people who get the best agent they can get in any stable matching?

The rank-maximal and generous CP and IP models could probably be improved. Especially CP both solves and builds for the same instance of SRI multiple times without remembering anything it has computed so far. IP builds the model only once, but does not use information from the previous iterations of solving it. Additionally, it is possible that solving multiple optimisation problems is not the most efficient way to solve for rank-maximal and generous SRI. An approach that uses exponential weights might work better, but one needs to make sure that the solver supports them.

\vskip8pt \noindent
{\bf Acknowledgments.}
The authors would like to thank Manuel Sorge and Jiehua Chen for observing that the inapproximability reduction for FC-SRI can also be used to show W[1]-hardness. Thank you for Dr. Jiehua Chen for also coming up with the idea for the XP-algorithm.

The authors would like to thank Alice Ravier for her help with proofreading and proofchecking. I am genuinely impressed by how many inequalities were fixed thanks to her.

\bibliographystyle{abbrv}
\bibliography{example}

\FloatBarrier

\newpage
\newpage
\FloatBarrier
\appendix

\section{Examples of the different optimality criteria}\label{app:optimality_examples}

\begin{table}[ht]
\centering
\begin{tabular}{ c c c c c c c c c c} 
$a_1$: & $a_8$ & $a_2$ & $a_9$& \cellcolor{green} $a_3$ & $a_6$ & $a_4$& $a_5$ & $a_7$ & $a_{10}$ \\ 
$a_2$: & \cellcolor[RGB]{255, 0 ,0} $a_4$ & $a_3$ & $a_8$& $a_9$ & $a_5$ & $a_1$& $a_{10}$ & $a_6$ & $a_7$ \\ 
$a_3$:  & $a_5$ & $a_6$ & $a_8$& $a_2$ & \cellcolor{green} $a_1$ & $a_7$& $a_{10}$ & $a_4$ & $a_9$ \\  
$a_4$: & $a_{10}$ & $a_7$ & $a_9$& $a_3$ & $a_1$ & $a_6$& \cellcolor[RGB]{255, 0 ,0}$a_2$ & $a_5$ & $a_8$ \\ 
$a_5$: & \cellcolor[RGB]{100, 150, 255}$a_7$ & $a_4$ & $a_{10}$& $a_8$ & $a_2$ & $a_6$& $a_3$ & $a_1$ & $a_9$ \\ 
$a_6$:  & $a_2$ & \cellcolor[RGB]{255, 255, 0} $a_8$ & $a_7$& $a_3$ & $a_4$ & $a_{10}$& $a_1$ & $a_5$ & $a_9$ \\ 
$a_7$: & $a_2$ & $a_1$ & $a_8$& $a_3$ & \cellcolor[RGB]{100, 150, 255}$a_5$ & $a_{10}$& $a_4$ & $a_6$ & $a_9$ \\ 
$a_8$: & $a_{10}$ & $a_4$ & $a_2$& $a_5$ & \cellcolor[RGB]{255, 255, 0} $a_6$ & $a_7$& $a_1$ & $a_3$ & $a_9$ \\ 
$a_9$:  & $a_6$ & $a_7$ & $a_2$& $a_5$ & \cellcolor[RGB]{255, 100, 0} $a_{10}$ & $a_3$& $a_4$ & $a_8$ & $a_1$ \\  
$a_{10}$:  & $a_3$ & $a_1$ & $a_6$& $a_5$ & $a_2$ & \cellcolor[RGB]{255, 100, 0} $a_9$& $a_8$ & $a_4$ & $a_7$ \\  
\end{tabular}
\caption{An instance of SR with seven stable matchings, with one of them highlighted. \cite{gale_college_1962}}
\label{table:sr_stable}
\end{table}

\begin{table}[ht]
\centering
\begin{tabular}{ c} 
$R_1$ = \{\{1, 3\}, \{2, 4\}, \{5, 7\}, \{6, 8\}, \{9, 10\}\}  \\
$R_2$ = \{\{1, 7\}, \{2, 8\}, \{3, 5\}, \{4, 9\}, \{6, 10\}\}  \\ 
$R_3$ = \{\{1, 4\}, \{2, 9\}, \{3, 6\}, \{5, 7\}, \{8, 10\}\}  \\ 
$R_4$ = \{\{1, 4\}, \{2, 3\}, \{5, 7\}, \{6, 8\}, \{9, 10\}\}  \\ 
$R_5$ = \{\{1, 4\}, \{2, 8\}, \{3, 6\}, \{5, 7\}, \{9, 10\}\}  \\ 
$R_6$ = \{\{1, 7\}, \{2, 3\}, \{4, 9\}, \{5, 10\}, \{6, 8\}\}  \\ 
$R_7$ = \{\{1, 7\}, \{2, 8\}, \{3, 6\}, \{4, 9\}, \{5, 10\}\}  \\ 
\end{tabular}
\caption{The stable matching of the instance in \ref{table:sr_stable} \cite{gale_college_1962}}
\label{table:sr_stable_matchings}
\end{table}

\begin{table}[ht]
\centering
\begin{tabular}{ c} 
$p(R_1) = \langle 2, 1, 0, 1, 4, 1, 1, 0, 0 \rangle $ \\
$p(R_2) = \langle 1, 1, 4, 0, 0, 1, 2, 1, 0 \rangle $ \\
$p(R_3) = \langle 2, 1, 1, 2, 2, 1, 1, 0, 0 \rangle $ \\
$p(R_4) = \langle 1, 2, 0, 1, 4, 2, 0, 0, 0 \rangle $ \\
$p(R_5) = \langle 1, 1, 2, 1, 3, 2, 0, 0, 0 \rangle $ \\
$p(R_6) = \langle 0, 3, 2, 2, 1, 0, 1, 1, 0 \rangle $ \\
$p(R_7) = \langle 0, 2, 4, 2, 0, 0, 1, 1, 0 \rangle $ \\
\end{tabular}
\caption{The profiles of the matchings in the table~\ref{table:sr_stable_matchings} \cite{gusfield_book_1989}}
\label{table:sr_profiles}
\end{table}

\begin{table}[ht]
\centering
\begin{tabular}{ c} 
$cost(R_1) =41 $ \\
$cost(R_2) =43 $ \\
$cost(R_3) =38 $ \\
$cost(R_4) =41 $ \\
$cost(R_5) =40 $ \\
$cost(R_6) =40 $ \\
$cost(R_7) =39 $ \\
\end{tabular}
\caption{The costs of the matchings in the table~\ref{table:sr_stable_matchings} \cite{gale_college_1962}}
\label{table:sr_costs}
\end{table}

\subsection{Egalitarian}
The matching $R_3$ is the egalitarian stable matching. This is because as we can see in Figure~\ref{table:sr_costs} it has the lowest cost.

\subsection{First-choice-maximal}
The matchings $R_1$ and $R_3$ are first choice maximal. As we can see in Figure~\ref{table:sr_profiles} both of them have 2 first choices (the first item in the profile) and no matching has more.

\subsection{Rank-maximal}
The matching $R_3$ is the rank-maximal stable matching. As we can see in Figure~\ref{table:sr_profiles}, both it and $R_1$ have the maximum number of first choices and they are tied on the number of second choices. We can see that $R_3$ has more third choices, though.

\subsection{Minimum-regret}
The matchings $R_4$ and $R_5$ are minimum-regret stable matchings. As we can see in Figure~\ref{table:sr_profiles} both of them have someone ranked to their sixth choice and no one lower than that. No other matching does better: all the other matching match someone to their seventh choice, some even to their eight.

\subsection{Generous}
The matching $R_5$ is the generous stable matching. We can see both it and $R_4$ have no ninth, eight or seventh choices. They both also have 2 sixth choices. The difference is that $R_5$ has only 3 fifth choices.

\section{IP Optimality Criteria}\label{app:ip_opt}
For egalitarian SRI we set the objective as  \lstinline{m.setObjective(x.prod(h.ranks))}. \lstinline{h.ranks} returns a dictionary from $(u,v)$ to $rank(u,v)$. \lstinline{x.prod(coeff)} is a function that multiplies each variable in $x$ with the supplied dictionary of coefficients.

For first-choice-maximal SRI we set the objective as \lstinline{x.prod(h.delta(1)}. \lstinline{h.delta(i)} returns a dictionary from $(u,v)$ pairs to $\delta^i(u, v)$.

Rank-maximal SRI is implemented as follows:
\begin{lstlisting}[language=Python,backgroundcolor = \color{white}, numbers=right]
for i in range(1, h.max_pref_length - 1):
    delta_i = h.delta(i)
    m.setObjective(x.prod(delta_i), GRB.MAXIMIZE)
    m.optimize()
    if not hasattr(m, "ObjVal"):
        has_solution = False
        break
    m.addConstr(x.prod(delta_i) >= m.ObjVal)
m.setObjective(x.prod(h.delta(h.max_pref_length - 1)), GRB.MAXIMIZE)
\end{lstlisting}
First and second line set the objective of the iteration, and the third line solves it. Lines 5-7 deal with the case where the problem is unsatisfiable. Line 8 sets the constraint for the next iteration. Line 9 solves the final optimisation problem. The code for generous SRI is very similar.

For almost stable, we create the variable $b_{u,v}$ with \lstinline{b = m.addVars(n, n, vtype=GRB.BINARY)} and add \lstinline{+ b[u,v]} to the original constraint in lines 4-8. We set our objective to \lstinline{b.sum()}.

\end{document}